\newtheorem{theorem}{Theorem}[section]
\newtheorem{lemma}[theorem]{Lemma}
\newtheorem{proposition}[theorem]{Proposition}
\newtheorem{corollary}[theorem]{Corollary}
\theoremstyle{definition}
\newtheorem{remark}[theorem]{Remark}
\journal{***}
\begin{document}

\begin{frontmatter}

\title{All $\alpha+u\beta$-constacyclic codes of length $np^{s}$ over $\mathbb{F}_{p^{m}}+u\mathbb{F}_{p^{m}}$}

%% Group authors per affiliation:

\author{Wei Zhao\corref{cor1}}
\ead{zhaowei4892@sina.com}

\author{Xilin Tang\corref{cor2}}
\ead{xilintang2016@sina.com}

\author{Ze Gu\corref{cor1}}
\ead{guze528@sina.com}

%% or include affiliations in footnotes:
\cortext[cor2]{Corresponding author, Supported in part by NSFC (No. 11571119)}

\address{Department of Mathematics, South China University of Technology, Guangzhou, Guangdong, 510640, P.R. China}

\begin{abstract}
Let $\mathbb{F}_{p^{m}}$ be a finite field with cardinality $p^{m}$ and $R=\mathbb{F}_{p^{m}}+u\mathbb{F}_{p^{m}}$ with $u^{2}=0$. We aim to determine all
$\alpha+u\beta$-constacyclic codes of length $np^{s}$ over $R$, where $\alpha,\beta\in\mathbb{F}_{p^{m}}^{*}$,
$n, s\in\mathbb{N}_{+}$ and $\gcd(n,p)=1$. Let $\alpha_{0}\in\mathbb{F}_{p^{m}}^{*}$ and $\alpha_{0}^{p^{s}}=\alpha$.
The residue ring $R[x]/\langle x^{np^{s}}-\alpha-u\beta\rangle$ is a chain ring with the maximal ideal $\langle x^{n}-\alpha_{0}\rangle$ in the case that $x^{n}-\alpha_{0}$ is irreducible in $\mathbb{F}_{p^{m}}[x]$. If $x^{n}-\alpha_{0}$
is reducible in $\mathbb{F}_{p^{m}}[x]$, we give the explicit expressions of the ideals of $R[x]/\langle x^{np^{s}}-\alpha-u\beta\rangle$.
Besides, the number of codewords and the dual code of every $\alpha+u\beta$-constacyclic code are provided.
\end{abstract}

\begin{keyword}
Constacyclic code\sep Dual code\sep Self dual code\sep Repeated-root code
\MSC[2010] 11T71 \sep  94B05 \sep 94B15
\end{keyword}

\end{frontmatter}

\section{Introduction}

The class of constacyclic codes plays a very significant role in the theory of error-correcting codes [23,25].
As a generalization of cyclic codes, constacyclic codes have practical applications as they
can be efficiently encoded with simple shift registers. Many researchers are  thus interested in this class of
codes for both theoretical and practical reasons.

Let $\mathbb{F}_{p^{m}}$ be a finite field with cardinality $p^{m}$, where $p$ is a prime and $m$ is a
positive integer. For any $\lambda\in\mathbb{F}_{p^{m}}^{*}$, a $\lambda$-constacyclic code of length
$n$ over $\mathbb{F}_{p^{m}}$ is an ideal of $\mathbb{F}_{p^{m}}[x]/\langle x^{n}-\lambda\rangle$. In
the literatures, most of the research is concentrated on the case of $\gcd(n,p)=1$. The case that $n$ is divisible by $p$
yields the so-called repeated-root codes, which were first studied in 1967 by Berman [6] and then
by Massey et al. [26], Falkner et al. [20] and Roth and Seroussi [31] in the 1970s and 1980s. Repeated-root
codes were investigated in the most generality by Castagnoli et al [11] and van Lint [32]. It turns out that
such codes are optimal in a few cases, which motivates researchers to further study this class of codes.

After the realization that many important yet seemingly non-linear codes over finite fields are actually
closely related to linear codes over $\mathbb{Z}_{4}$ in particular, and codes over finite rings in general,
the codes over finite rings have attracted a great deal of attention (see [1, 7, 9, 28]). The classification
of codes plays a very important role in studying their structures and encoders. However, it is a very difficult
task in general and only several codes of special lengths over certain finite fields or finite chain rings are classified.
The classification and the detailed structures of all constacyclic codes of length $2^{s}$ over the Galois extension rings of
$\mathbb{F}_{2}+u\mathbb{F}_{2}$ are given in [13]. In [15], Dinh classified all
constacyclic codes of length $p^{s}$ over $\mathbb{F}_{p^{m}}+u\mathbb{F}_{p^{m}}$. Later, Dinh et al.[16]
studied negacyclic codes of length $2p^{s}$ over the ring $\mathbb{F}_{p^{m}}+u\mathbb{F}_{p^{m}}$.
Chen et al. [12] investigated all constacyclic codes of length $2p^{s}$ over the ring
$\mathbb{F}_{p^{m}}+u\mathbb{F}_{p^{m}}$. For any $\alpha\in\mathbb{F}_{p^{m}}^{*}$, the
$\alpha$-constacyclic codes of length $np^{s}$ over $\mathbb{F}_{p^{m}}+u\mathbb{F}_{p^{m}}$
are provided by Cao et al. in [10]. The purpose of this paper is to
determine the algebraic structures of all $\alpha+u\beta$-constacyclic codes of length $np^{s}$ over
$\mathbb{F}_{p^{m}}+u\mathbb{F}_{p^{m}}$.

The remainders of this paper are organized as follows. Preliminary concepts and results are shown in Section 2.
We assume in Section 3 that $x^{n}-\alpha_{0}$ is irreducible in $\mathbb{F}_{p^{m}}[x]$, where $\alpha_{0}\in \mathbb{F}_{p^{m}}$
 and satisfies $\alpha=\alpha_{0}^{p^{s}}$. It is  shown that
the ambient ring $(\mathbb{F}_{p^{m}}+u\mathbb{F}_{p^{m}})[x]/\langle x^{np^{s}}-\alpha-u\beta\rangle$
is a chain ring with a maximal ideal $\langle x^{n}-\alpha_{0}\rangle$, and thus all
$\alpha+u\beta$-constacyclic codes are $\langle (x^{n}-\alpha_{0})^{i}\rangle$,
$0\leq i\leq 2p^{s}$. In Section 4, we consider the remaining case where $x^{n}-\alpha_{0}$
is reducible in $\mathbb{F}_{p^{m}}[x]$.  According to the factorization of $x^{n}-\alpha_{0}$,
the detailed structures of ideals of the ambient ring
$(\mathbb{F}_{p^{m}}+u\mathbb{F}_{p^{m}})[x]/\langle x^{np^{s}}-\alpha-u\beta\rangle$
are provided. Among other results, we also exhibit the number of $\alpha+u\beta$-constacyclic
codes and the dual of every $\alpha+u\beta$-constacyclic code.

\section{Preliminaries}

Let $R$ be a finite commutative ring with identity $1\neq0$. An ideal $I$ of $R$ is called
$\emph{principal}$ if it is generated by one element. If all the ideals of $R$ are
principal, then $R$ is called a $\emph{principal ideal ring}$. $R$ is called a
$\emph{local ring}$ if $R$ has a unique maximal ideal. A ring $R$ is called a \textit{chain ring} if the set
of all ideals of $R$ is linearly ordered under set inclusion. The following
equivalence conditions are well known (cf.[17]).

\begin{proposition}
~Let $R$ be a finite commutative ring. Then the following conditions are equivalent:

($i$) $R$ is a local ring and the unique maximal ideal $M$ of $R$ is principal, i.e.,
$M=\langle r\rangle$ for some $r\in R$;

($ii$) $R$ is a local principal ideal ring;

($iii$) $R$ is a chain ring whose ideals are $\langle r^{i}\rangle$, $0\leq i\leq N(r)$,
where $N(r)$ is the nilpotency index of $r$.

Moreover, if $R$ is a finite chain ring with the unique maximal ideal $\langle r\rangle$
and the nilpotency index of $r$ is $e$, then the cardinality of the ideal
$\langle r^{i}\rangle$ is $|R/\langle r\rangle|^{e-i}$ for $i=0,1,\ldots,e-1$.
\end{proposition}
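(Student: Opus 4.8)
The plan is to establish the cycle of implications $(ii)\Rightarrow(i)$, $(i)\Rightarrow(iii)$, $(iii)\Rightarrow(ii)$, treating $(i)\Rightarrow(iii)$ as the one substantive step and deriving the cardinality formula at the end. The two short implications come first. For $(ii)\Rightarrow(i)$, a local principal ideal ring automatically has a principal maximal ideal, since \emph{every} ideal is principal, so there is nothing to do. For $(iii)\Rightarrow(ii)$, if the ideals are exactly $\langle r^{i}\rangle$ with $0\le i\le N(r)$, then they are all principal, and because they form a single chain there is a unique maximal proper ideal, namely $\langle r\rangle$; hence $R$ is a local principal ideal ring.

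The heart of the argument is $(i)\Rightarrow(iii)$. Assuming $R$ is local with $M=\langle r\rangle$, I would first note that $R$, being finite (hence Artinian), has nilpotent Jacobson radical, and in a local ring this radical is exactly $M$; write $e=N(r)$ for the nilpotency index, so $r^{e}=0\ne r^{e-1}$. The key structural lemma is that every nonzero $a\in R$ admits a representation $a=w r^{j}$ with $w\in R$ a unit and $0\le j<e$, where $j$ is uniquely determined by $a$. Existence follows by repeatedly factoring out $r$: either $a$ is a unit ($j=0$), or $a\in M=\langle r\rangle$ forces $a=r a_{1}$, and iterating, the process must stop before step $e$ since $r^{e}=0$. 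Uniqueness uses that in a local ring a unit minus a non-unit is again a unit, so an equality $wr^{j}=w'r^{k}$ with $j<k$ would yield $r^{j}\cdot(\text{unit})=0$, contradicting $j<e$. Granting this normal form, for an arbitrary nonzero ideal $I$ I would set $i=\min\{j:\exists\,a\in I\setminus\{0\},\ a=wr^{j}\}$; then $r^{i}\in I$ (multiply a witness by $w^{-1}$) and every element of $I$ lies in $\langle r^{i}\rangle$, so $I=\langle r^{i}\rangle$. Thus the ideals are precisely the $\langle r^{i}\rangle$ for $0\le i\le e$, which is $(iii)$.

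Finally, for the cardinality claim I would analyse the successive quotients of the chain $R=\langle r^{0}\rangle\supsetneq\langle r\rangle\supsetneq\cdots\supsetneq\langle r^{e}\rangle=0$. The map $R/\langle r\rangle\to\langle r^{i}\rangle/\langle r^{i+1}\rangle$ sending $\bar a\mapsto\overline{ar^{i}}$ is a well-defined surjection, and it is injective because $ar^{i}\in\langle r^{i+1}\rangle$ forces $(a-cr)r^{i}=0$ for some $c$, whence $a-cr$ is a non-unit (otherwise $r^{i}=0$ with $i<e$) and therefore $a\in\langle r\rangle$. Hence each quotient has the same size $|R/\langle r\rangle|$, and multiplying the $e-i$ factors gives $|\langle r^{i}\rangle|=|R/\langle r\rangle|^{e-i}$.

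The main obstacle throughout is the normal-form lemma: once the unique factorization $a=wr^{j}$ is available, both the classification of ideals and the counting of the graded pieces follow directly. The delicate points to get right are the nilpotency of $r$ (which rests on finiteness of $R$) and the repeated use of the ``unit $\pm$ non-unit is a unit'' property of local rings, which underlies both the uniqueness in the normal form and the injectivity in the cardinality computation.
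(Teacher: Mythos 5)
Your proof is correct, but there is nothing in the paper to compare it against: the authors state this proposition as ``well known'' and simply cite the literature (their reference [17], Dinh and L\'opez-Permouth), giving no proof of their own. Your argument is the standard one and is complete: the cycle $(ii)\Rightarrow(i)\Rightarrow(iii)\Rightarrow(ii)$ is organized sensibly, the normal-form lemma $a=wr^{j}$ (with existence from finiteness/nilpotency of the radical and uniqueness from ``unit minus non-unit is a unit'') is exactly the right engine, and the cardinality count via the bijections $R/\langle r\rangle\to\langle r^{i}\rangle/\langle r^{i+1}\rangle$ is carried out correctly, including the injectivity step that is the one place people usually hand-wave. The only detail you leave implicit is that the inclusions $\langle r^{i}\rangle\supseteq\langle r^{i+1}\rangle$ are strict for $i<e$, but this follows by the same computation you already use for uniqueness ($r^{i}=cr^{i+1}$ forces $r^{i}(1-cr)=0$ with $1-cr$ a unit, hence $r^{i}=0$), so it is a one-line addition rather than a gap.
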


A \textit{code} $\mathcal{C}$ of length $n$ over $R$ is a nonempty subset of
$R^{n}$. The code $\mathcal{C}$
is said to be $\emph{linear}$ if $\mathcal{C}$ is a $R$-submodule of $R^{n}$. For a unit
$\lambda$ of $R$, the $\lambda$-constacyclic ($\lambda$-twisted) shift $\tau_{\lambda}$ on
$R^{n}$ is the shift
\begin{center}
$\tau_{\lambda}(x_{0},x_{1},...,x_{n-1})=(\lambda x_{n-1},x_{0},...,x_{n-2})$.
\end{center}
A linear code $\mathcal{C}$ is said to be \textit{$\lambda$-constacyclic} if $\tau_{\lambda}(\mathcal{C})=\mathcal{C}$.
Particularly, $\mathcal{C}$ is called a \textit{cyclic code} if $\lambda=1$, and a \textit{negacyclic code}
if $\lambda=-1$. For any $a=(a_{0},a_{1},..., a_{n-1})\in R^{n}$, let
$a(x)=\sum\limits_{i=0}^{n-1}a_{i}x^{i}\in R[x]/\langle x^{n}-\lambda\rangle$. The codewords of $\mathcal{C}$ are then
identified with the polynomials in $R[x]$. In the ring $R[x]/\langle x^{n}-\lambda\rangle$, $xa(x)$ corresponds to a
$\lambda$-constacyclic shift of $a(x)$. From that, the following proposition is straightforward [23, 25].

\begin{proposition}
~A linear code $\mathcal{C}$ of length $n$ over $R$ is a $\lambda$-constacyclic code if and only if
$\mathcal{C}$ is an ideal of the residue class ring $R[x]/\langle x^{n}-\lambda\rangle$.
\end{proposition}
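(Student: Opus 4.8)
The plan is to exploit the standard identification between vectors and polynomials set up in the paragraph preceding the statement: a codeword $a=(a_{0},a_{1},\ldots,a_{n-1})\in R^{n}$ is identified with $a(x)=\sum_{i=0}^{n-1}a_{i}x^{i}$ in the quotient $R[x]/\langle x^{n}-\lambda\rangle$. The crux on which everything rests is the observation that, under this identification, the $\lambda$-constacyclic shift $\tau_{\lambda}$ is precisely multiplication by $x$. First I would verify this directly: multiplying $a(x)$ by $x$ gives $a_{0}x+a_{1}x^{2}+\cdots+a_{n-1}x^{n}$, and reducing modulo $x^{n}-\lambda$ replaces $x^{n}$ by $\lambda$, yielding $\lambda a_{n-1}+a_{0}x+\cdots+a_{n-2}x^{n-1}$, which corresponds exactly to the vector $\tau_{\lambda}(a)$. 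Thus $x\cdot a(x)\leftrightarrow\tau_{\lambda}(a)$.

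With this in hand, the forward direction is immediate. If $\mathcal{C}$ is $\lambda$-constacyclic, then by definition it is linear, hence an $R$-submodule, and it satisfies $\tau_{\lambda}(\mathcal{C})=\mathcal{C}$, i.e. $x\cdot\mathcal{C}\subseteq\mathcal{C}$. Iterating gives $x^{i}\cdot\mathcal{C}\subseteq\mathcal{C}$ for every $i\geq0$, and combining this with closure under $R$-scalar multiplication and under addition shows that $f(x)\cdot\mathcal{C}\subseteq\mathcal{C}$ for every $f(x)\in R[x]/\langle x^{n}-\lambda\rangle$. Hence $\mathcal{C}$ is an ideal.

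For the converse, suppose $\mathcal{C}$ is an ideal. Closure under multiplication by the constants $R\subseteq R[x]/\langle x^{n}-\lambda\rangle$ together with closure under addition makes $\mathcal{C}$ an $R$-submodule, so it is linear; closure under multiplication by $x$ gives $\tau_{\lambda}(\mathcal{C})\subseteq\mathcal{C}$. The one point requiring a little care is promoting this inclusion to the equality $\tau_{\lambda}(\mathcal{C})=\mathcal{C}$ demanded by the definition. I would handle this in either of two ways: since $\lambda$ is a unit, $x$ is invertible in the quotient with $x^{-1}=\lambda^{-1}x^{n-1}$, so the ideal $\mathcal{C}$ is also stable under $x^{-1}$, which yields $\tau_{\lambda}^{-1}(\mathcal{C})\subseteq\mathcal{C}$ and hence the reverse inclusion; alternatively, because $R$ is finite the set $\mathcal{C}$ is finite and $\tau_{\lambda}$ is injective, so the self-map $\tau_{\lambda}|_{\mathcal{C}}$ is automatically surjective onto $\mathcal{C}$. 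Either way $\tau_{\lambda}(\mathcal{C})=\mathcal{C}$, and $\mathcal{C}$ is $\lambda$-constacyclic.

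I would not expect any genuine obstacle here, as the result is foundational and its entire content is the translation of the constacyclic shift into multiplication by $x$. The only place meriting attention is the equality-versus-inclusion issue in the converse, which is exactly where the hypothesis that $\lambda$ is a unit (equivalently, the finiteness of $R$) is used.
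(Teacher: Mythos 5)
Your proof is correct and follows exactly the route the paper indicates: the paper states the result without a written proof, remarking only that it is "straightforward" from the identification of $xa(x)$ with the $\lambda$-constacyclic shift of $a(x)$, which is precisely the observation you verify and then exploit in both directions. Your extra care in promoting $\tau_{\lambda}(\mathcal{C})\subseteq\mathcal{C}$ to equality (via invertibility of $x$ or finiteness) is a detail the paper leaves implicit, and both of your arguments for it are valid.
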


The inner product of ambient space $R^{n}$ is defined as usual, i.e.,
$\textbf{a}\cdot \textbf{b}=\sum\limits_{i=0}^{n-1}a_{i}b_{i}$, where $\textbf{a}=(a_{0},a_{1},..., a_{n-1})$,
$\textbf{b}=(b_{0},b_{1},...,b_{n-1})\in R^{n}$. For a linear code $\mathcal{C}$ over $R$,
its \emph{dual} code $\mathcal{C}^{\bot}$ is the set of $n$-tuples over $R$ that
are orthogonal to all codewords of $\mathcal{C}$, i.e., $\mathcal{C}^{\bot}=\{\textbf{x}|\textbf{x}\cdot \textbf{y}=0,\forall \textbf{y}\in\mathcal{C}\}$.
A code $\mathcal{C}$ is called $\emph{self-orthogonal}$ if $\mathcal{C}\subseteq \mathcal{C}^{\bot}$,
and it is called \textit{self-dual} if $\mathcal{C}=\mathcal{C}^{\bot}$.
We list some known results which will be used in our paper below.

\begin{proposition}(Proposition 2.11 [17])
~Let $p$ be a prime and $R$ be a finite ring of size $p^{\alpha}$.
The number of codewords in any linear code $\mathcal{C}$ of length $n$ over
$R$ is $p^{k}$, for some integer $k\in\{0,1,...,\alpha n\}$. Moreover,
the dual code $\mathcal{C}^{\bot}$ has $p^{l}$ codewords, where
$k+l=\alpha n$.
\end{proposition}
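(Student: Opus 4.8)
The plan is to handle the two claims in turn. Writing $|\mathcal{C}|=p^{k}$ and $|\mathcal{C}^{\bot}|=p^{l}$, the relation $k+l=\alpha n$ is equivalent to the single identity $|\mathcal{C}|\cdot|\mathcal{C}^{\bot}|=|R|^{n}=p^{\alpha n}$, so I would first justify that both cardinalities are powers of $p$ and then establish this product formula. For the first point, note that since $|R|=p^{\alpha}$ the additive group $(R,+)$ is a finite abelian $p$-group, hence so is $(R^{n},+)$, of order $p^{\alpha n}$. A linear code is an $R$-submodule of $R^{n}$ and therefore an additive subgroup; the same is true of $\mathcal{C}^{\bot}$, since the inner product is $R$-bilinear. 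By Lagrange's theorem the order of any such subgroup divides $p^{\alpha n}$, so indeed $|\mathcal{C}|=p^{k}$ and $|\mathcal{C}^{\bot}|=p^{l}$ with $k,l\in\{0,1,\ldots,\alpha n\}$.

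For the product formula I would argue through the character group of the finite abelian group $G=(R^{n},+)$. Let $\widehat{G}=\mathrm{Hom}(G,\mathbb{C}^{*})$, so that $|\widehat{G}|=|G|=p^{\alpha n}$, and recall the standard annihilator duality: for any subgroup $H\le G$ the set $H^{\circ}=\{\chi\in\widehat{G}:\chi(h)=1\text{ for all }h\in H\}$ has order $|G|/|H|$. The idea is to realise $\mathcal{C}^{\bot}$ as a copy of the annihilator $\mathcal{C}^{\circ}$. Fixing a character $\psi$ of $(R,+)$, I would define $\Phi\colon R^{n}\to\widehat{G}$ by letting $\Phi(\mathbf{a})$ be the character $\mathbf{b}\mapsto\psi(\mathbf{a}\cdot\mathbf{b})$; this is an additive homomorphism, and once it is shown to be bijective with $\Phi^{-1}(\mathcal{C}^{\circ})=\mathcal{C}^{\bot}$ one gets $|\mathcal{C}^{\bot}|=|\mathcal{C}^{\circ}|=|G|/|\mathcal{C}|=p^{\alpha n-k}$, i.e.\ $l=\alpha n-k$.

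The main obstacle is to choose $\psi$ so that these two properties of $\Phi$ hold; both rest on the non-degeneracy of the pairing $(a,b)\mapsto\psi(ab)$ on $R$. The key requirement is that $\psi$ be a \emph{generating character}, meaning that $\psi(ab)=1$ for all $b\in R$ forces $a=0$. Granting this, injectivity of $\Phi$ follows by testing $\Phi(\mathbf{a})$ against vectors supported in a single coordinate, and then bijectivity is automatic because $|R^{n}|=|\widehat{G}|$. The inclusion $\mathcal{C}^{\bot}\subseteq\Phi^{-1}(\mathcal{C}^{\circ})$ is trivial; conversely, if $\psi(\mathbf{a}\cdot\mathbf{c})=1$ for all $\mathbf{c}\in\mathcal{C}$, then using that $\mathcal{C}$ is an $R$-module I have $\psi\bigl(r(\mathbf{a}\cdot\mathbf{c})\bigr)=\psi\bigl(\mathbf{a}\cdot(r\mathbf{c})\bigr)=1$ for every $r\in R$, whence the generating property gives $\mathbf{a}\cdot\mathbf{c}=0$ and $\mathbf{a}\in\mathcal{C}^{\bot}$; here the full module structure of $\mathcal{C}$, not merely additivity, is what is used. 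Finally, a generating character exists exactly when $R$ is a Frobenius ring, which is the genuinely structural input; this holds for the rings of interest here, since finite chain rings — in particular $\mathbb{F}_{p^{m}}+u\mathbb{F}_{p^{m}}$ — are Frobenius. With such a $\psi$ fixed, the count above closes the argument and yields $k+l=\alpha n$.
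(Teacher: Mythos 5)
Your argument is correct, but note that the paper itself offers no proof of this statement: it is imported verbatim as Proposition 2.11 of the cited reference [17] (Dinh and L\'opez-Permouth, on codes over finite chain rings), so there is no in-paper argument to compare against. Your character-theoretic proof is the standard one and all the steps check out: Lagrange's theorem gives that $|\mathcal{C}|$ and $|\mathcal{C}^{\bot}|$ are $p$-powers; the map $\Phi(\mathbf{a})=\bigl(\mathbf{b}\mapsto\psi(\mathbf{a}\cdot\mathbf{b})\bigr)$ is injective (test against the vectors $re_{i}$), hence bijective by cardinality; and the identification $\Phi^{-1}(\mathcal{C}^{\circ})=\mathcal{C}^{\bot}$ together with $|\mathcal{C}^{\circ}|=|G|/|\mathcal{C}|$ yields $k+l=\alpha n$. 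Your use of the $R$-module structure of $\mathcal{C}$ in the reverse inclusion is exactly where a merely additive subgroup would not suffice, and you are right to isolate it.

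The one point worth stressing is that you have, in effect, \emph{corrected} the statement rather than proved it as written. As phrased here (``$R$ a finite ring of size $p^{\alpha}$''), the product formula $|\mathcal{C}|\,|\mathcal{C}^{\bot}|=|R|^{n}$ is false in general: for $R=\mathbb{F}_{2}[x,y]/\langle x^{2},xy,y^{2}\rangle$ and $\mathcal{C}=\langle x,y\rangle\subseteq R^{1}$ one gets $|\mathcal{C}|\,|\mathcal{C}^{\bot}|=16\neq 8$. The Frobenius hypothesis you invoke (equivalently, the existence of a generating character) is genuinely needed, and it is satisfied in the source's setting of finite chain rings and in particular for $\mathbb{F}_{p^{m}}+u\mathbb{F}_{p^{m}}$, where one can take $\psi(a+ub)=\zeta_{p}^{\mathrm{Tr}(b)}$ explicitly. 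So your proof establishes the proposition in the generality actually used by the paper, which is all that is required, but it is worth recording that the hypothesis ``finite ring'' should read ``finite Frobenius (e.g.\ chain) ring.''
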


\begin{proposition}(Proposition 2.4 [13])
~The dual of a $\lambda$-constacyclic code of length $n$ over $R$ is a
$\lambda^{-1}$-constacyclic code of length $n$ over $R$, i.e., an ideal
of $R[x]/\langle x^{n}-\lambda^{-1}\rangle$.
\end{proposition}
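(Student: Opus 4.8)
The plan is to argue directly with the constacyclic shift and to reduce the whole statement to a single bilinear identity relating $\tau_{\lambda}$ and $\tau_{\lambda^{-1}}$. First I would record two cheap preliminary observations. Since the inner product on $R^{n}$ is $R$-bilinear, the dual $\mathcal{C}^{\bot}$ of \emph{any} subset is an $R$-submodule of $R^{n}$, hence a linear code; so by Proposition 2.2 it suffices to show that $\mathcal{C}^{\bot}$ is $\lambda^{-1}$-constacyclic, i.e.\ that $\tau_{\lambda^{-1}}(\mathcal{C}^{\bot})=\mathcal{C}^{\bot}$. Moreover, because $\lambda^{-1}$ is a unit, $\tau_{\lambda^{-1}}$ is an $R$-linear bijection of $R^{n}$, and $\mathcal{C}^{\bot}$ is finite; therefore it is enough to prove the single inclusion $\tau_{\lambda^{-1}}(\mathcal{C}^{\bot})\subseteq\mathcal{C}^{\bot}$, the reverse inclusion following automatically by comparing cardinalities of a finite set with its injective image.

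The heart of the argument is the identity
\[
\tau_{\lambda^{-1}}(\mathbf{d})\cdot\mathbf{c}=\mathbf{d}\cdot\tau_{\lambda}^{-1}(\mathbf{c}),\qquad \mathbf{c},\mathbf{d}\in R^{n},
\]
which I would verify by a direct index computation. Writing $\mathbf{c}=(c_{0},\dots,c_{n-1})$ and $\mathbf{d}=(d_{0},\dots,d_{n-1})$, one has $\tau_{\lambda^{-1}}(\mathbf{d})=(\lambda^{-1}d_{n-1},d_{0},\dots,d_{n-2})$, while inverting the $\lambda$-shift gives $\tau_{\lambda}^{-1}(\mathbf{c})=(c_{1},c_{2},\dots,c_{n-1},\lambda^{-1}c_{0})$. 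Expanding both inner products, each collapses to the same sum $\lambda^{-1}d_{n-1}c_{0}+\sum_{i=0}^{n-2}d_{i}c_{i+1}$, so the two sides coincide. I expect this bookkeeping — getting the wrap-around term and the placement of the $\lambda^{-1}$ factor correct — to be the only step requiring genuine care; everything else is formal.

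With the identity established the conclusion is immediate. If $\mathcal{C}$ is $\lambda$-constacyclic then $\tau_{\lambda}(\mathcal{C})=\mathcal{C}$, and applying $\tau_{\lambda}^{-1}$ to both sides shows $\tau_{\lambda}^{-1}(\mathcal{C})=\mathcal{C}$ as well. Hence for any $\mathbf{d}\in\mathcal{C}^{\bot}$ and any $\mathbf{c}\in\mathcal{C}$ we have $\tau_{\lambda}^{-1}(\mathbf{c})\in\mathcal{C}$, so $\mathbf{d}\cdot\tau_{\lambda}^{-1}(\mathbf{c})=0$, and the identity then gives $\tau_{\lambda^{-1}}(\mathbf{d})\cdot\mathbf{c}=0$. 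As $\mathbf{c}\in\mathcal{C}$ was arbitrary, this proves $\tau_{\lambda^{-1}}(\mathbf{d})\in\mathcal{C}^{\bot}$, whence $\tau_{\lambda^{-1}}(\mathcal{C}^{\bot})\subseteq\mathcal{C}^{\bot}$ and, by the cardinality remark, equality. Invoking Proposition 2.2 one final time, the fact that $\mathcal{C}^{\bot}$ is $\lambda^{-1}$-constacyclic is exactly the assertion that it is an ideal of $R[x]/\langle x^{n}-\lambda^{-1}\rangle$, which completes the proof.
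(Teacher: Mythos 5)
Your argument is correct and complete: the identity $\tau_{\lambda^{-1}}(\mathbf{d})\cdot\mathbf{c}=\mathbf{d}\cdot\tau_{\lambda}^{-1}(\mathbf{c})$ checks out (both sides equal $\lambda^{-1}d_{n-1}c_{0}+\sum_{i=0}^{n-2}d_{i}c_{i+1}$), and the reduction to a single inclusion via finiteness is sound. Note that the paper itself gives no proof here --- it quotes the statement as a known result from reference [13] --- and your argument is essentially the standard one used in that source, so there is nothing in the paper to contrast it with.
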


In the next, without special instructions, $R$ denotes $\mathbb{F}_{p^{m}}+u\mathbb{F}_{p^{m}}$ with $u^{2}=0$. It is known that $R$ is
a chain ring with the unique maximal ideal $u\mathbb{F}_{p^{m}}$. Let
$\mathcal{R}=R[x]/\langle x^{np^{s}}-\alpha+u\beta\rangle$, where $\alpha+u\beta$ is a unit in $R$.
Clearly, $\alpha+u\beta$ is a unit in $\mathbb{F}_{p^{m}}+u\mathbb{F}_{p^{m}}$
if and only if $\alpha\in\mathbb{F}_{p^{m}}^{*}$. It follows from Proposition 2.2
that $\alpha+u\beta$-constacyclic codes of length $np^{s}$ over $R$ are ideals of
$\mathcal{R}$. When $\beta=0$, the structures of $\alpha$-constacyclic codes
of length $np^{s}$ over $R$ are provided by Cao et al. in [10]. In this paper, we
always assume $\alpha,\beta\neq0$. For any $\alpha\in\mathbb{F}_{p^{m}}^{*}$, there exist
$\alpha_{0}\in\mathbb{F}_{p^{m}}^{*}$ such that $\alpha_{0}^{p^{s}}=\alpha$. The
structures of $\alpha+u\beta$-constacyclic codes of length $np^{s}$ over $R$ are
dependent on whether $x^{n}-\alpha_{0}$ is irreducible or not in $\mathbb{F}_{p^{m}}[x]$.
From [24], we have the following result.

\begin{lemma}
Let $n\geq2$ be an integer and $a\in\mathbb{F}_{q}^{*}$. Then the binomial $x^{n}-a$ is
irreducible in $\mathbb{F}_{q}[x]$ if and only if the following two conditions are satisfied:

($i$) each prime factor of $n$ divides the order $e$ of $a$ in $\mathbb{F}_{q}^{*}$, but not $\frac{q-1}{e}$;

($ii$) $q\equiv1\mod4$ if $n\equiv0 \mod4$.
\end{lemma}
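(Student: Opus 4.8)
The plan is to recast irreducibility of $x^{n}-a$ as a statement about multiplicative orders. Fix a root $\xi$ of $x^{n}-a$ in an algebraic closure of $\mathbb{F}_{q}$ and set $N=\text{ord}(\xi)$. Since $x^{n}-a$ is irreducible precisely when it is the minimal polynomial of $\xi$, irreducibility is equivalent to $[\mathbb{F}_{q}(\xi):\mathbb{F}_{q}]=n$; and because $\mathbb{F}_{q}(\xi)=\mathbb{F}_{q^{d}}$ for the least $d$ with $N\mid q^{d}-1$, this degree is exactly the multiplicative order of $q$ modulo $N$. Feeding in $\xi^{n}=a$ together with $\text{ord}(a)=e$ yields the constraint $N=e\gcd(N,n)$, so the whole question becomes elementary arithmetic relating $N$, $e$, $n$ and $q$. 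First I would establish these reductions and the formula for the degree.

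Next I would reduce to the case where $n$ is a prime power. Writing $n=\prod_i r_i^{k_i}$ and using the Chinese Remainder Theorem on the roots, one proves a multiplicativity lemma: for coprime $n_{1},n_{2}$ the binomial $x^{n_{1}n_{2}}-a$ is irreducible if and only if both $x^{n_{1}}-a$ and $x^{n_{2}}-a$ are (the stated conditions (i),(ii) are themselves multiplicative in this sense, since condition (ii) only ever concerns the $2$-part of $n$). It then suffices to treat $n=r^{k}$. For an odd prime $r$ I would show, by an $r$-adic valuation analysis of the relation $N=e\gcd(N,n)$ and the degree formula, that $x^{r^{k}}-a$ is irreducible if and only if $r\mid e$ and $r\nmid (q-1)/e$; the failure of this amounts to $a$ being an $r$-th power in $\mathbb{F}_{q}$, which forces $x^{r^{k}}-a$ to be reducible. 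This is precisely condition (i) of the lemma.

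The delicate case, and the main obstacle, is $r=2$ with $k\geq 2$, i.e.\ $4\mid n$. Here condition (i) alone is no longer sufficient: there is an extra obstruction visible already for $n=4$ through the Sophie Germain factorization
\[
x^{4}+4c^{4}=(x^{2}-2cx+2c^{2})(x^{2}+2cx+2c^{2}),
\]
so that whenever $a$ lies in the coset $-4\,\mathbb{F}_{q}^{4}$ the binomial $x^{4}-a$ is reducible despite satisfying (i). The core of the argument is therefore to prove that, under condition (i), $x^{2^{k}}-a$ is irreducible if and only if $q\equiv 1\pmod 4$. This is where the $2$-primary arithmetic is genuinely different from the odd case: the group $(\mathbb{Z}/2^{j}\mathbb{Z})^{*}$ is not cyclic for $j\geq 3$, so the order of $q$ modulo powers of $2$ behaves anomalously, and one must carefully determine when $-4$ (equivalently $-1$, or $2$) becomes a square or a fourth power in $\mathbb{F}_{q}$. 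Handling this $2$-adic bookkeeping correctly, and matching it to the clean statement $q\equiv 1\pmod 4$, is the step I expect to require the most care.

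Finally I would assemble the prime-power results through the multiplicativity lemma to recover the full criterion: condition (i) accounts for every prime factor of $n$, while condition (ii) is exactly the surviving $2$-primary constraint that appears only when $4\mid n$.
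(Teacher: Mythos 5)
You should note at the outset that the paper does not prove this lemma at all: it is quoted verbatim from reference [24] (Lidl and Niederreiter, \emph{Finite Fields}, where it is Theorem~3.75), so there is no internal proof to compare against. Your outline is, in substance, the standard proof of that classical result, and its main steps are all correct: irreducibility of $x^{n}-a$ is equivalent to $\mathrm{ord}_{N}(q)=n$ where $N$ is the multiplicative order of a root $\xi$ and $N=e\gcd(N,n)$; the multiplicativity lemma for coprime $n_{1},n_{2}$ is true and provable directly (one direction by comparing degrees in the tower $\mathbb{F}_{q}\subseteq\mathbb{F}_{q}(\xi^{n_{2}})\subseteq\mathbb{F}_{q}(\xi)$, the other by writing $\xi=(\xi^{n_{1}})^{u}(\xi^{n_{2}})^{v}$ from $un_{1}+vn_{2}=1$); condition ($i$) for a prime $r$ is exactly the statement that $a$ is not an $r$-th power in $\mathbb{F}_{q}$, which settles all odd prime powers; and condition ($ii$) is the surviving Capelli-type obstruction, since for $q\equiv3\pmod4$ one has $(\mathbb{F}_{q}^{*})^{4}=(\mathbb{F}_{q}^{*})^{2}$ and $-4(\mathbb{F}_{q}^{*})^{4}$ is precisely the coset of non-squares, so every $a$ satisfying ($i$) falls under the factorization $x^{4}+4c^{4}=(x^{2}-2cx+2c^{2})(x^{2}+2cx+2c^{2})$. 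The only reservation is one of completeness rather than correctness: the two computations you yourself flag as delicate (the $r$-adic analysis showing $\mathrm{ord}_{N}(q)=n$ for odd $r$, and the $2$-primary bookkeeping for $2^{k}$ with $k\geq2$) are described but not carried out, and a finished write-up should also make explicit that the sufficiency direction controls the order $N$ of an arbitrary root, not just a conveniently chosen one. As a proof strategy it is sound and matches the source the paper is citing.
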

Let $T$ be a finite local commutative ring with maximal ideal $M$. Denote the residue field $T/M$ by $K$ and the
natural projection $T[x]\rightarrow K[x]$ by $\varphi$. Thus the natural ring morphism $T\rightarrow K$ is
simply the restriction of $\varphi$ to the constant polynomials. Let $f$ and $g$ be in $T[x]$. $f$ is a unit
if there exists a polynomial $h$ such that $fh=1$; $f$ is \textit{regular} if $f$ is not a zero divisor; $f$ and $g$ are \textit{coprime}
if $T[x]=(f)+(g)$. By [27], we have the following proposition.

\begin{proposition}
 (Hensel's Lemma) Let $f$ be in $T[x]$ and $\varphi f=\overline{g}_{1}\cdot\cdot\cdot\overline{g}_{n}$,
where $\overline{g}_{1}$,...,$\overline{g}_{n}\in K[x]$ are pairwise coprime. Then there exist $g_{1}$,...,$g_{n}$ in $R[x]$ such that

($i$) $g_{1}$,...,$g_{n}$ are pairwise coprime;

($ii$) $\varphi g_{i}=\overline{g}_{i}, 1\leq i\leq n$;

($iii$) $f=g_{1}\cdot\cdot\cdot g_{n}$.
\end{proposition}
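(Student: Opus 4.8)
The plan is to reduce to the two-factor case $n=2$ by induction and then lift the factorization through the $M$-adic filtration of $T$, exploiting the fact that $T$, being finite and local, has a nilpotent maximal ideal.

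First I would record the two structural facts that drive the argument. Since $T$ is finite it is Artinian, so its maximal ideal $M$ is nilpotent; fix $N$ with $M^{N}=0$. Consequently $M[x]$ is a nilpotent ideal of $T[x]$ (indeed $(M[x])^{N}=M^{N}[x]=0$), and hence $M[x]$ is contained in the Jacobson radical of $T[x]$. This gives the key tool: every element of the form $1-m$ with $m\in M[x]$ is a unit of $T[x]$, which is exactly what will convert coprimality over the residue field into coprimality over $T[x]$.

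For the inductive step I would treat $n=2$. Because $K=T/M$ is a field, $K[x]$ is a principal ideal domain, so coprimality of $\overline{g}_{1},\overline{g}_{2}$ yields $\overline{a},\overline{b}\in K[x]$ with $\overline{a}\,\overline{g}_{1}+\overline{b}\,\overline{g}_{2}=1$; I would lift these to $a,b\in T[x]$ and fix any lifts $g_{1}^{(1)},g_{2}^{(1)}$ of $\overline{g}_{1},\overline{g}_{2}$, so that $f\equiv g_{1}^{(1)}g_{2}^{(1)}\pmod{M[x]}$. I then construct by successive approximation lifts $g_{i}^{(k)}$ with $\varphi g_{i}^{(k)}=\overline{g}_{i}$ and $f\equiv g_{1}^{(k)}g_{2}^{(k)}\pmod{M^{k}[x]}$. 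Given such a pair, set $\Delta=f-g_{1}^{(k)}g_{2}^{(k)}\in M^{k}[x]$ and put $g_{i}^{(k+1)}=g_{i}^{(k)}+\delta_{i}$ with $\delta_{1}=b\Delta$ and $\delta_{2}=a\Delta$, both in $M^{k}[x]$. Since $\delta_{1}\delta_{2}\in M^{2k}[x]\subseteq M^{k+1}[x]$ for $k\geq 1$, and $a\,g_{1}^{(k)}+b\,g_{2}^{(k)}\equiv 1\pmod{M[x]}$, a short computation gives $f-g_{1}^{(k+1)}g_{2}^{(k+1)}\equiv\Delta-\Delta=0\pmod{M^{k+1}[x]}$, so the approximation sharpens by one level. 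Because $M^{N}=0$, iterating up to $k=N$ produces an exact factorization $f=g_{1}g_{2}$ with $\varphi g_{i}=\overline{g}_{i}$.

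To obtain coprimality over $T[x]$, I would note that $(g_{1})+(g_{2})$ surjects under $\varphi$ onto $(\overline{g}_{1})+(\overline{g}_{2})=K[x]$, so $(g_{1})+(g_{2})+M[x]=T[x]$; writing $1=i+m$ with $i\in(g_{1})+(g_{2})$ and $m\in M[x]$, the element $i=1-m$ is a unit by the radical remark above, whence $(g_{1})+(g_{2})=T[x]$. Finally I would run the induction on $n$: grouping $\varphi f=\overline{g}_{1}\cdot(\overline{g}_{2}\cdots\overline{g}_{n})$, the two factors are coprime in $K[x]$, so the case $n=2$ splits $f=g_{1}h$ with $g_{1},h$ coprime and $\varphi h=\overline{g}_{2}\cdots\overline{g}_{n}$; the inductive hypothesis applied to $h$ factors it as $g_{2}\cdots g_{n}$ with pairwise coprime factors, and since $(g_{2}\cdots g_{n})\subseteq(g_{i})$ the coprimality of $g_{1}$ with $h$ descends to each $g_{i}$. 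The delicate points—and the main obstacle—are the termination of the approximation and the passage from ``coprime modulo $M$'' to ``coprime over $T[x]$''; both are resolved by the nilpotency of $M$, which is precisely where the finiteness of $T$ enters.
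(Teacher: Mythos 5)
Your proof is correct, but there is nothing in the paper to compare it against: the authors do not prove this proposition at all, they simply quote it from McDonald's book on finite rings (reference [27]), so any proof you give is necessarily ``a different route.'' What you supply is the standard Newton/successive-approximation proof of Hensel's lemma specialized to the case where the maximal ideal is nilpotent, and every step checks out: the reduction to $n=2$ via grouping $\overline{g}_{1}$ against $\overline{g}_{2}\cdots\overline{g}_{n}$ (coprime in the PID $K[x]$) is valid; the iteration $g_{i}^{(k+1)}=g_{i}^{(k)}+\delta_{i}$ with $\delta_{1}=b\Delta$, $\delta_{2}=a\Delta$ does improve the congruence from $M^{k}[x]$ to $M^{k+1}[x]$ because $a g_{1}^{(k)}+b g_{2}^{(k)}\equiv 1 \pmod{M[x]}$ and $\Delta^{2}\in M^{2k}[x]\subseteq M^{k+1}[x]$ for $k\geq 1$; the termination at $k=N$ uses $M^{N}=0$ rather than completeness, which is exactly where finiteness of $T$ enters; and the upgrade from coprimality modulo $M$ to coprimality in $T[x]$ via ``$1-m$ is a unit for $m$ in a nilpotent ideal'' is the right Nakayama-type argument, with the final descent $(g_{1})+(g_{i})\supseteq(g_{1})+(h)=T[x]$ handling pairwise coprimality. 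Two cosmetic remarks: the statement's ``$g_{1},\ldots,g_{n}$ in $R[x]$'' is a typo for $T[x]$, which you correctly read as such; and note that your argument (deliberately) does not control degrees or monicity of the $g_{i}$, which is fine since the proposition as stated does not ask for it, though the version in [27] and its later use in Section 4 of the paper implicitly rely on being able to choose the lifts compatibly with the factorization of $x^{n}-\alpha_{0}$.
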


It is obvious that a polynomial in $T[x]$ may not have a unique factorization. However, for regular polynomials over a
finite local commutative ring, the following important factorization property is from [27].

\begin{proposition}
Let $f$ be a regular polynomial in $T[x]$. Then

($i$) $f=\delta g_{1}\cdot\cdot\cdot g_{n}$ where $\delta$ is a unit and $g_{1},...,g_{n}$ are regular primary coprime polynomials;

($ii$) If $f=\delta g_{1}\cdot\cdot\cdot g_{n}=\beta h_{1}\cdot\cdot\cdot h_{m}$ where $\delta$ and $\beta$ are units and ${g_{i}}$
and ${h_{j}}$ are regular primary coprime polynomials, then $n=m$ and, after renumbering, $(h_{i})=(g_{i})$, $1\leq i\leq n$.
\end{proposition}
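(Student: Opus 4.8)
The plan is to treat the two parts separately: I would derive the existence in (i) from Hensel's Lemma (Proposition 2.7) together with unique factorization in the polynomial ring $K[x]$ over the residue field, and then obtain the uniqueness in (ii) from the Chinese Remainder Theorem combined with the uniqueness of the decomposition of the finite ring $T[x]/(f)$ into local factors. Before starting either part I would record two standard facts about the finite local ring $T$ whose maximal ideal $M$ is nilpotent. First, a polynomial $f=\sum a_i x^i\in T[x]$ is regular if and only if $\varphi f\neq 0$, equivalently some $a_i$ is a unit (by McCoy's theorem a zero divisor is annihilated by a nonzero constant, and such a constant kills every $a_i$, which is impossible once one $a_i$ is a unit; conversely, if all $a_i\in M$ an element in the socle of $M$ annihilates $f$). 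Second, a regular $g$ is primary exactly when $\varphi g$ is associate to a power $\bar p^{\,e}$ of a single monic irreducible $\bar p\in K[x]$, and this is equivalent to $T[x]/(g)$ being local, since $MT[x]$ is nilpotent, hence lies in the radical, so the maximal ideals of $T[x]/(g)$ coincide with those of $K[x]/(\varphi g)$.

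For (i): since $f$ is regular we have $\varphi f\neq 0$, and in the UFD $K[x]$ I may write $\varphi f=c\,\bar p_1^{\,e_1}\cdots \bar p_n^{\,e_n}$ with $c\in K^{*}$ and the $\bar p_i$ distinct monic irreducibles. Folding $c$ into the first factor, the polynomials $\bar g_1=c\,\bar p_1^{e_1}$, $\bar g_2=\bar p_2^{e_2},\ldots,\bar g_n=\bar p_n^{e_n}$ are pairwise coprime in $K[x]$, so Proposition 2.7 supplies pairwise coprime $g_1,\ldots,g_n\in T[x]$ with $\varphi g_i=\bar g_i$ and $f=g_1\cdots g_n$. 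By the two observations above each $g_i$ is regular ($\varphi g_i\neq 0$) and primary ($\varphi g_i$ a prime power), which gives the asserted factorization; here one may take $\delta=1$, a unit being split off only if one prefers monic factors.

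For (ii): given $f=\delta\,g_1\cdots g_n=\beta\,h_1\cdots h_m$ with all factors regular primary coprime, pairwise coprimeness lets me invoke the Chinese Remainder Theorem to obtain ring isomorphisms
\[
T[x]/(f)\;\cong\;\prod_{i=1}^{n}T[x]/(g_i)\;\cong\;\prod_{j=1}^{m}T[x]/(h_j).
\]
By the primary-iff-local observation every factor on both sides is a local ring, so each product exhibits $A:=T[x]/(f)$ as a decomposition into local rings. A finite commutative ring possesses a unique complete set of primitive orthogonal idempotents, hence a unique decomposition into local rings; therefore $n=m$, and after renumbering the idempotent $\epsilon_i$ cutting out the $i$-th factor is the same for both decompositions. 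Finally I would identify each factor's ideal with an idempotent: because $g_i$ is zero modulo $g_i$ but a unit modulo $g_k$ for $k\neq i$, its image satisfies $(g_i)A=A(1-\epsilon_i)$, and likewise $(h_i)A=A(1-\epsilon_i)$; since both $(g_i)$ and $(h_i)$ contain $(f)$, pulling back gives $(g_i)=(h_i)$ in $T[x]$.

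The principal obstacle is precisely the uniqueness in (ii): merely projecting to $K[x]$ matches the factors only up to associates and says nothing about the ideals $(g_i),(h_i)$ themselves — indeed distinct regular primary polynomials with equal images (for instance $x$ and $x+2$ over $\mathbb{Z}_4$, both reducing to $x$) can generate different ideals. What rescues uniqueness is that the factors are constrained to be coprime divisors of one fixed $f$, so the correct invariant is the idempotent decomposition of the single ring $T[x]/(f)$ rather than the factors in isolation. The remaining care is to verify that "primary" really forces the CRT factors to be local and that the matching of idempotents is compatible with the ideal correspondence $(g_i)A=A(1-\epsilon_i)$.
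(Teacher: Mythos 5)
The paper does not actually prove this proposition: it is quoted verbatim from McDonald's \emph{Finite Rings with Identity} (reference [27]), so there is no in-paper argument to compare against. Your proof is correct and essentially reconstructs the classical one. For existence, applying Hensel's Lemma (Proposition 2.7) to the prime-power factorization of $\varphi f$ in the UFD $K[x]$ is exactly the standard route, and your two preliminary observations (regular $\iff$ $\varphi f\neq 0$ via McCoy, and primary $\iff$ $T[x]/(g)$ local because $MT[x]$ is nilpotent) are the right reductions. For uniqueness, transferring the question to the unique decomposition of the finite ring $A=T[x]/(f)$ into local factors via its primitive orthogonal idempotents, and then recovering $(g_i)=(h_i)$ from $(g_i)A=A(1-\epsilon_i)=(h_i)A$ together with $(f)\subseteq (g_i)\cap(h_i)$, is the correct mechanism; your $\mathbb{Z}_{4}$ example ($x$ versus $x+2$) accurately identifies why reduction modulo $M$ alone cannot yield (ii). Two small points to tighten, neither of which affects correctness: (a) you use several times that $A$ and the $T[x]/(g_i)$ are \emph{finite} rings, and this deserves a line of justification, e.g.\ that a regular polynomial is an associate of a monic one, or that $MA$ is nilpotent with successive quotients finitely generated over the finite ring $K[x]/(\varphi f)$; (b) ``an element in the socle of $M$'' should read ``a nonzero element of $\mathrm{Ann}_{T}(M)$,'' which exists because $M$ is nilpotent (and the degenerate case $M=0$, where $T$ is a field, should be noted separately since then $f$ with all coefficients in $M$ is just $f=0$).
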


\section{The case that $x^{n}-\alpha_{0}$ is irreducible in $\mathbb{F}_{p^{m}}[x]$}

In this subsection, we determine the structures of all $\alpha+u\beta$-constacyclic codes of length $np^{s}$ over $R$
when $x^{n}-\alpha_{0}$ is irreducible in $\mathbb{F}_{p^{m}}[x]$. It is clear that any polynomial in $\mathcal{R}$ can
be viewed as a polynomial in $R[x]$.
We start with the following proposition.

\begin{proposition}
Each nonzero polynomial $f(x)$ in $\mathbb{F}_{p^{m}}[x]$ with degree less than $n$ is
invertible in $\mathcal{R}$, that is, there exists $g(x)\in R[x]$ such that $f(x)g(x)\equiv1~(\mod x^{np^{s}}-\alpha-u\beta)$.
\end{proposition}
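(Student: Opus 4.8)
The plan is to work modulo $u$ and then lift. Since $u^{2}=0$, the ideal $u\mathcal{R}$ is nilpotent, so it suffices to show that the image of $f(x)$ is a unit in the quotient ring $\mathcal{R}/u\mathcal{R}$ and then to upgrade this to an honest inverse in $\mathcal{R}$. First I would identify $\mathcal{R}/u\mathcal{R}$: reducing the defining relation $x^{np^{s}}-\alpha-u\beta$ modulo $u$ kills the term $u\beta$, so $\mathcal{R}/u\mathcal{R}\cong\mathbb{F}_{p^{m}}[x]/\langle x^{np^{s}}-\alpha\rangle$. Using that $\alpha_{0}^{p^{s}}=\alpha$ together with the Frobenius identity in characteristic $p$, I would rewrite $x^{np^{s}}-\alpha=(x^{n})^{p^{s}}-\alpha_{0}^{p^{s}}=(x^{n}-\alpha_{0})^{p^{s}}$, so that $\mathcal{R}/u\mathcal{R}\cong\mathbb{F}_{p^{m}}[x]/\langle(x^{n}-\alpha_{0})^{p^{s}}\rangle$.

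Next, because $x^{n}-\alpha_{0}$ is irreducible in $\mathbb{F}_{p^{m}}[x]$, the ring $\mathbb{F}_{p^{m}}[x]/\langle(x^{n}-\alpha_{0})^{p^{s}}\rangle$ is a finite chain ring whose unique maximal ideal is $\langle x^{n}-\alpha_{0}\rangle$ and whose residue field is $\mathbb{F}_{p^{m}}[x]/\langle x^{n}-\alpha_{0}\rangle$; this is precisely the situation of Proposition 2.1 applied to the nilpotent generator $x^{n}-\alpha_{0}$. In a local ring an element is a unit exactly when it does not lie in the maximal ideal, so it remains to check that the image of $f(x)$ avoids $\langle x^{n}-\alpha_{0}\rangle$. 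An element of this quotient lies in the maximal ideal exactly when $x^{n}-\alpha_{0}$ divides its canonical representative in $\mathbb{F}_{p^{m}}[x]$. Since $f(x)\neq0$ and $\deg f(x)<n=\deg(x^{n}-\alpha_{0})$, the irreducible polynomial $x^{n}-\alpha_{0}$ cannot divide $f(x)$; hence the image of $f(x)$ is a unit in $\mathcal{R}/u\mathcal{R}$.

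Finally I would lift the inverse. The previous step produces $g(x)\in R[x]$ with $f(x)g(x)\equiv1\pmod{u}$ in $\mathcal{R}$, say $f(x)g(x)=1+u\,h(x)$ for some $h(x)\in\mathcal{R}$. Because $u^{2}=0$ forces $(u\,h(x))^{2}=0$, the element $1+u\,h(x)$ is invertible with inverse $1-u\,h(x)$, since $(1+u\,h(x))(1-u\,h(x))=1$. Consequently $f(x)\bigl(g(x)(1-u\,h(x))\bigr)=1$ in $\mathcal{R}$, which exhibits the desired inverse.

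The argument is essentially a reduction to the residue field followed by a nilpotent lift, so no single step is a genuine obstacle; the one point requiring care is the explicit description of the maximal ideal of $\mathbb{F}_{p^{m}}[x]/\langle(x^{n}-\alpha_{0})^{p^{s}}\rangle$ together with the degree bookkeeping guaranteeing $x^{n}-\alpha_{0}\nmid f(x)$, as this is exactly where both hypotheses $\deg f<n$ and the irreducibility of $x^{n}-\alpha_{0}$ enter.
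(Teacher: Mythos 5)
Your proof is correct, but it takes a genuinely different route from the paper's. The paper argues by induction on $\deg f$: dividing $x^{n}-\alpha_{0}=f(x)q(x)+r(x)$ and applying the Frobenius gives $f(x)^{p^{s}}q(x)^{p^{s}}=u\beta-r(x)^{p^{s}}$ in $\mathcal{R}$; irreducibility of $x^{n}-\alpha_{0}$ forces $r(x)\neq0$ with $\deg r<\deg f$, and once $r(x)$ is invertible an explicit inverse of $f(x)$ is written down in terms of $r(x)^{-1}$, so the problem descends step by step to nonzero constants. You instead reduce modulo the nilpotent ideal $u\mathcal{R}$, identify $\mathcal{R}/u\mathcal{R}\cong\mathbb{F}_{p^{m}}[x]/\langle(x^{n}-\alpha_{0})^{p^{s}}\rangle$ as a local ring whose non-units are exactly the multiples of $x^{n}-\alpha_{0}$, observe that a nonzero $f$ of degree less than $n$ cannot be such a multiple, and lift the unit through $u^{2}=0$. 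Both arguments are sound and invoke the two hypotheses (the degree bound and irreducibility) at the analogous points. Your version is more structural and buys more: it shows that the units of $\mathcal{R}$ are precisely the elements whose reduction modulo $u$ is prime to $x^{n}-\alpha_{0}$, which already anticipates most of the content of Theorem 3.3, and it sidesteps explicit inverse computations (where the paper in fact has a small sign slip, since $(u\beta-r(x)^{p^{s}})(u\beta r(x)^{-2p^{s}}+r(x)^{-p^{s}})=-1$). The paper's version is more elementary, staying entirely inside polynomial arithmetic, and produces a concrete recursive formula for $f(x)^{-1}$ at the cost of the induction bookkeeping.
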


\begin{proof}
Let $f(x)$ be a nonzero polynomial in $\mathbb{F}_{p^{m}}[x]$ and $0<deg(f)=k<n$. By the division with remainder in
$\mathbb{F}_{p^{m}}[x]$, there exist unique $q(x),r(x)\in\mathbb{F}_{p^{m}}[x]$ such that

\begin{center}
  $x^{n}-\alpha_{0}=f(x)q(x)+r(x)$, $0\leq deg(r)<k$.
\end{center}
Thus
\begin{center}
  $x^{np^{s}}-\alpha=f(x)^{p^{s}}q(x)^{p^{s}}+r(x)^{p^{s}}$.
\end{center}
Noticing that $r(x)\neq0$ since $x^{n}-\alpha_{0}$ is irreducible in $\mathbb{F}_{p^{m}}[x]$.
In the ring $\mathcal{R}$,

\begin{center}
  $f(x)^{p^{s}}q(x)^{p^{s}}+r(x)^{p^{s}}-u\beta=0$.
\end{center}
Let $r(x)^{-1}$ be the inverse element of $r(x)$ if $r(x)$ is invertible in $\mathcal{R}$.
It is easy to check that $u\beta-r(x)^{p^{s}}$ is invertible in $\mathcal{R}$ and
$(u\beta-r(x)^{p^{s}})^{-1}=u\beta r(x)^{-2p^{s}}+r(x)^{-p^{s}}$. It follows that
\begin{center}
$f(x)^{-1}=f(x)^{p^{s}-1}q(x)^{p^{s}}(u\beta r(x)^{-2p^{s}}+r(x)^{-p^{s}})$,
\end{center}
which means that $f(x)$ is invertible in $\mathcal{R}$. Thus we need only prove that all the polynomials
in $\mathbb{F}_{p^{m}}[x]$ of degree less than $k$ are invertible in $\mathcal{R}$.
By induction, it is suffice to consider the case of $k=1$.
In fact, $r(x)\in\mathbb{F}_{p^{m}}^{*}$ is invertible in $\mathcal{R}$ if $k=1$. Thus $f(x)$ is invertible as desired.
\end{proof}

\begin{lemma}
In $\mathcal{R}$, we have $\langle(x^{n}-\alpha_{0})^{p^{s}}\rangle=\langle u\rangle$.
Moreover, $x^{n}-\alpha_{0}$ is nilpotent with nilpotency index $2p^{s}$.
\end{lemma}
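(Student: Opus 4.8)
The plan is to reduce everything to the characteristic-$p$ identity
\[
(x^{n}-\alpha_{0})^{p^{s}}=x^{np^{s}}-\alpha_{0}^{p^{s}}=x^{np^{s}}-\alpha,
\]
which holds in $\mathbb{F}_{p^{m}}[x]$ because the field has characteristic $p$ and $p^{s}$ is a power of $p$ (iterating $(a+b)^{p}=a^{p}+b^{p}$ with $a=x^{n}$, $b=-\alpha_{0}$). First I would pass to the ring $\mathcal{R}$, where the defining relation gives $x^{np^{s}}=\alpha+u\beta$; substituting this into the identity above yields $(x^{n}-\alpha_{0})^{p^{s}}=u\beta$ in $\mathcal{R}$. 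Since $\beta\in\mathbb{F}_{p^{m}}^{*}$ is a unit of $R$, multiplying by $\beta^{-1}$ shows $\langle(x^{n}-\alpha_{0})^{p^{s}}\rangle=\langle u\beta\rangle=\langle u\rangle$, which is the first assertion.

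For the nilpotency claim, squaring the identity gives $(x^{n}-\alpha_{0})^{2p^{s}}=(u\beta)^{2}=u^{2}\beta^{2}=0$ since $u^{2}=0$, so $x^{n}-\alpha_{0}$ is nilpotent of index at most $2p^{s}$. It then remains to show that the index is not smaller, i.e.\ that $(x^{n}-\alpha_{0})^{2p^{s}-1}\neq0$ in $\mathcal{R}$. Writing
\[
(x^{n}-\alpha_{0})^{2p^{s}-1}=(x^{n}-\alpha_{0})^{p^{s}}(x^{n}-\alpha_{0})^{p^{s}-1}=u\beta(x^{n}-\alpha_{0})^{p^{s}-1},
\]
I would reduce the problem to proving $u\,g(x)\neq0$ in $\mathcal{R}$ for the specific nonzero polynomial $g(x)=\beta(x^{n}-\alpha_{0})^{p^{s}-1}\in\mathbb{F}_{p^{m}}[x]$.

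The key structural fact I would invoke is that $\mathcal{R}=R[x]/\langle x^{np^{s}}-\alpha-u\beta\rangle$ is a free $R$-module with basis $1,x,\dots,x^{np^{s}-1}$, since the defining polynomial is monic of degree $np^{s}$. Because $\deg g=n(p^{s}-1)<np^{s}$, the polynomial $g$ is already in reduced form, and $u\,g(x)=\sum_{i}(ug_{i})x^{i}$ with each coefficient lying in $u\mathbb{F}_{p^{m}}$. As $g\neq0$, some coefficient $g_{i}\in\mathbb{F}_{p^{m}}^{*}$, and then $ug_{i}\neq0$ in $R$; hence $u\,g(x)\neq0$ in $\mathcal{R}$. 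This gives $(x^{n}-\alpha_{0})^{2p^{s}-1}\neq0$ and pins the nilpotency index at exactly $2p^{s}$.

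I expect the main obstacle to be this lower bound on the nilpotency index: the upper bound and the ideal identity follow immediately from the freshman's-dream computation, whereas ruling out $(x^{n}-\alpha_{0})^{2p^{s}-1}=0$ requires a genuine argument about the $R$-module structure of $\mathcal{R}$, namely the injectivity of multiplication by $u$ on polynomials of degree below $np^{s}$. The one point that needs care is confirming that $(x^{n}-\alpha_{0})^{p^{s}-1}$ is a genuinely nonzero reduced representative, which holds precisely because its degree $n(p^{s}-1)$ is strictly below $np^{s}$.
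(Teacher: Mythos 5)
Your proposal is correct and follows essentially the same route as the paper: both rest on the Frobenius identity $(x^{n}-\alpha_{0})^{p^{s}}=x^{np^{s}}-\alpha=u\beta$ together with $u^{2}=0$ and the invertibility of $\beta$. The only difference is that you also verify the lower bound on the nilpotency index (that $(x^{n}-\alpha_{0})^{2p^{s}-1}=u\beta(x^{n}-\alpha_{0})^{p^{s}-1}\neq0$, via the freeness of $\mathcal{R}$ as an $R$-module on $1,x,\dots,x^{np^{s}-1}$), a detail the paper leaves implicit; this is a genuine and welcome addition of rigor rather than a different method.
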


\begin{proof}
The desired result follows from the facts that $(x^{n}-\alpha_{0})^{p^{s}}=x^{np^{s}}-\alpha=u\beta$
and $\beta$ is invertible in $\mathcal{R}$. The nilpotency index of $x^{n}-\alpha_{0}$ is obtained from $u^{2}=0$.
\end{proof}

Let $f(x)=f_{1}(x)+uf_{2}(x)$, where $f_{1}(x)$, $f_{2}(x)$ are polynomials over $\mathbb{F}_{p^{m}}$
of degree up to $np^{s}-1$. By Proposition 3.1 and Lemma 3.2, we get the following theorem.

\begin{theorem}
The ring $\mathcal{R}$ is a chain ring whose ideal chain is as follows
\begin{center}
  $\mathcal{R}=\langle1\rangle\supsetneq\langle x^{n}-\alpha_{0}\rangle\supsetneq\cdot\cdot\cdot\supsetneq\langle (x^{n}-\alpha_{0})^{2p^{s}-1}\rangle\supsetneq\langle (x^{n}-\alpha_{0})^{2p^{s}}\rangle=\langle0\rangle$.
\end{center}
 In other words, $(\alpha+u\beta)$-constacyclic codes of length $np^{s}$ over $R$ are precisely the ideals
 $\langle (x^{n}-\alpha_{0})^{i}\rangle$ of $\mathcal{R}$, $0\leq i\leq2p^{s}$. The number of codewords
 of $(\alpha+u\beta)$-constacyclic code $\langle (x^{n}-\alpha_{0})^{i}\rangle$ is $p^{mn(2p^{s}-i)}$.
\end{theorem}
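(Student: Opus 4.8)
The plan is to deduce everything from Proposition 2.1 applied to the element $r=x^{n}-\alpha_{0}$, so the real work is to check that $\mathcal{R}$ is a local ring whose unique maximal ideal is the principal ideal $\langle x^{n}-\alpha_{0}\rangle$. Lemma 3.2 already supplies the nilpotency index $N(r)=2p^{s}$, so once locality and principality are in hand, Proposition 2.1$(iii)$ immediately yields the chain $\mathcal{R}=\langle 1\rangle\supsetneq\langle x^{n}-\alpha_{0}\rangle\supsetneq\cdots\supsetneq\langle(x^{n}-\alpha_{0})^{2p^{s}}\rangle=\langle 0\rangle$ together with the identification of the $\alpha+u\beta$-constacyclic codes with these ideals (via Proposition 2.2).

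First I would show that $M=\langle x^{n}-\alpha_{0}\rangle$ is a maximal ideal by computing the quotient $\mathcal{R}/M$. Working in $R[x]$ modulo $x^{n}-\alpha_{0}$ and using $x^{np^{s}}-\alpha=(x^{n}-\alpha_{0})^{p^{s}}$ in characteristic $p$, the defining relation $x^{np^{s}}-\alpha-u\beta=0$ collapses to $u\beta=0$; since $\beta$ is a unit this forces $u=0$ in the quotient, whence
\[
\mathcal{R}/M\;\cong\;\mathbb{F}_{p^{m}}[x]/\langle x^{n}-\alpha_{0}\rangle\;\cong\;\mathbb{F}_{p^{mn}},
\]
a field, because $x^{n}-\alpha_{0}$ is irreducible of degree $n$. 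Hence $M$ is maximal and $|\mathcal{R}/M|=p^{mn}$.

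Next I would establish that $M$ is the \emph{unique} maximal ideal. By Lemma 3.2 the generator $r=x^{n}-\alpha_{0}$ satisfies $r^{2p^{s}}=0$, so every element $ar$ of $M$ is nilpotent; thus $M$ is contained in the nilradical $\mathrm{Nil}(\mathcal{R})$. Since the nilradical lies in every maximal ideal, any maximal ideal $M'$ satisfies $M\subseteq\mathrm{Nil}(\mathcal{R})\subseteq M'$, and maximality of $M$ forces $M'=M$. Therefore $\mathcal{R}$ is local with principal maximal ideal $M$. (Alternatively, Proposition 3.1 gives the same conclusion directly: an element outside $M$ reduces modulo the nilpotent $u=\beta^{-1}r^{p^{s}}$ to a polynomial in $\mathbb{F}_{p^{m}}[x]$ coprime to $x^{n}-\alpha_{0}$, which is a unit, so the non-units are exactly $M$.)

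Finally, Proposition 2.1 applies: $\mathcal{R}$ is a chain ring with ideals $\langle r^{i}\rangle$ for $0\le i\le 2p^{s}$, which is exactly the asserted chain, and its cardinality formula gives $|\langle r^{i}\rangle|=|\mathcal{R}/M|^{2p^{s}-i}=(p^{mn})^{2p^{s}-i}=p^{mn(2p^{s}-i)}$, the claimed number of codewords. The only genuine obstacle is the locality step; the cleanest route is the nilradical argument above, which turns the two ingredients already available—nilpotency of $x^{n}-\alpha_{0}$ from Lemma 3.2 and the field quotient $\mathcal{R}/M$—into uniqueness of the maximal ideal with almost no computation.
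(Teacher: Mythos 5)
Your proof is correct, but it reaches locality by a genuinely different route than the paper. The paper's proof writes an arbitrary element as $f(x)=r_{1}(x)+h(x)(x^{n}-\alpha_{0})$ via division with remainder and then invokes Proposition 3.1 (every nonzero polynomial over $\mathbb{F}_{p^{m}}$ of degree less than $n$ is a unit in $\mathcal{R}$) to conclude that the non-units are exactly $\langle x^{n}-\alpha_{0}\rangle$; this is the "show every element is a unit or lies in $M$" strategy, and it leans on the somewhat delicate inductive argument behind Proposition 3.1. You instead prove maximality of $M=\langle x^{n}-\alpha_{0}\rangle$ by computing the quotient: modulo $x^{n}-\alpha_{0}$ the defining relation collapses to $u\beta=0$, hence $u=0$, giving $\mathcal{R}/M\cong\mathbb{F}_{p^{m}}[x]/\langle x^{n}-\alpha_{0}\rangle\cong\mathbb{F}_{p^{mn}}$; and you get uniqueness from the observation that $M$ is generated by a nilpotent, so $M\subseteq\mathrm{Nil}(\mathcal{R})\subseteq M'$ for every maximal ideal $M'$. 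All steps check out: the ideal computation $\langle x^{np^{s}}-\alpha-u\beta,\,x^{n}-\alpha_{0}\rangle=\langle u,\,x^{n}-\alpha_{0}\rangle$ is right, and in a commutative ring an ideal generated by a nilpotent element does consist of nilpotents. Your approach buys brevity and conceptual clarity --- it needs only Lemma 3.2 and irreducibility, and bypasses Proposition 3.1 entirely; it also hands you $|\mathcal{R}/M|=p^{mn}$ for free, which feeds directly into the cardinality count. The paper's approach buys explicitness: it exhibits a normal form for every element of $\mathcal{R}$ and (via Proposition 3.1) explicit inverses, which is useful computationally even though it is longer.
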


\begin{proof}
As above, let $f(x)=f_{1}(x)+uf_{2}(x)$ be any polynomial in $\mathcal{R}$, where $f_{1}(x),f_{2}(x)\in\mathbb{F}_{p^{m}}[x]$.
In $\mathbb{F}_{p^{m}}[x]$, there exist uniquely $q_{1}(x)$, $r_{1}(x)$ such that
\begin{center}
$f_{1}(x)=q_{1}(x)(x^{n}-\alpha_{0})+r_{1}(x)$,
\end{center}
where $r_{1}(x)=0$ or $0\leq deg(r_{1})<n$. Thus
\begin{eqnarray*}
% \nonumber to remove numbering (before each equation)
  f(x) &=& q_{1}(x)(x^{n}-\alpha_{0})+r_{1}(x)+\beta^{-1}(x^{n}-\alpha_{0})^{p^{s}}f_{2}(x) \\
   &=& r_{1}(x)+h(x)(x^{n}-\alpha_{0}),
\end{eqnarray*}
where $h(x)\triangleq q_{1}(x)+\beta^{-1}(x^{n}-\alpha_{0})^{p^{s}-1}f_{2}(x)$.
Since $x^{n}-\alpha_{0}$ is nilpotent from Lemma 3.2, $r_{1}(x)$ is invertible in $\mathcal{R}$ if and only if $f(x)$ is invertible in $\mathcal{R}$.
If $r_{1}(x)$ is non-invertible, we obtain $r_{1}(x)=0$ by Proposition 3.1. It follows that $f(x)\in\langle x^{n}-\alpha_{0}\rangle$.
Thus $f(x)$ is non-invertible if and only if $f(x)\in\langle x^{n}-\alpha_{0}\rangle$. In other words, $\langle x^{n}-\alpha_{0}\rangle$
is the unique maximal ideal of $\mathcal{R}$. By Proposition 2.1, $\mathcal{R}$ is a chain ring with maximal ideal $\langle x^{n}-\alpha_{0}\rangle$.
Therefore, the theorem is proved.
\end{proof}

\begin{remark}
It is obvious that $\alpha_{0}x-1$ is irreducible in $\mathbb{F}_{p^{m}}[x]$,
then Theorem 4.2 in [15] is a corollary of Theorem 3.3. When $\alpha_{0}$
is not a square in $\mathbb{F}_{p^{m}}$, the order of $\alpha_{0}$ is an even number.
By Lemma 2.5 we get $x^{2}-\alpha_{0}$ is irreducible in $\mathbb{F}_{p^{m}}[x]$, then
Theorem 3.3 in [12] is also a corollary of Theorem 3.3.
\end{remark}

We also need to consider the algebraic structure of the dual codes of the $\alpha+u\beta$-constacyclic codes
which are given in Theorem 3.3. By Proposition 2.4, the dual code of an $\alpha+u\beta$-constacyclic code
$\mathcal{C}$ is an $(\alpha+u\beta)^{-1}$-constacyclic code. It is clear that
$(\alpha+u\beta)^{-1}=\alpha^{-1}-u\alpha^{-2}\beta$. Thus
$\mathcal{C^{\bot}}\subseteq R[x]/\langle x^{np^{s}}-\alpha^{-1}-u\alpha^{-2}\beta\rangle$. Let $\alpha_{0}^{'p^{s}}=\alpha^{-1}$.
It is easy to verify that $\alpha_{0}^{'}=\alpha_{0}^{-1}$ and $ord(\alpha_{0})=ord(\alpha_{0}^{-1})$. By Lemma 2.5, $x^{n}-\alpha_{0}$ is irreducible
in $\mathbb{F}_{p^{m}}[x]$ if and only if $x^{n}-\alpha_{0}^{-1}$ is irreducible in $\mathbb{F}_{p^{m}}[x]$. It follows from Theorem 3.3 that
$R[x]/\langle x^{np^{s}}-\alpha^{-1}-u\alpha^{-2}\beta\rangle$ is also a chain ring with the unique maximal ideal
$\langle x^{n}-\alpha_{0}^{-1}\rangle$. Hence,we have the following corollary.

\begin{corollary}
~Let $\mathcal{C}=\langle (x^{n}-\alpha_{0})^{i}\rangle\subseteq\mathcal{R}$ for some $i\in\{0,1,...,2p^{s}\}$
be an $(\alpha+u\beta)$-constacyclic code of length $np^{s}$ over $R$. Then its dual code $\mathcal{C}^{\bot}$ is an
$(\alpha^{-1}-u\alpha^{-2}\beta)$-constacyclic code
\begin{center}
  $\mathcal{C}^{\bot}=\langle (x^{n}-\alpha_{0}^{-1})^{2p^{s}-i}\rangle\subseteq R[x]/\langle x^{np^{s}}-\alpha^{-1}-u\alpha^{-2}\beta\rangle$.
\end{center}
\end{corollary}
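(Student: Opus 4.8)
The plan is to exploit the fact, already recorded in the paragraph preceding the statement, that the dual ambient ring $R[x]/\langle x^{np^{s}}-\alpha^{-1}-u\alpha^{-2}\beta\rangle$ is itself a chain ring with maximal ideal $\langle x^{n}-\alpha_{0}^{-1}\rangle$ and nilpotency index $2p^{s}$. Indeed, $(\alpha+u\beta)^{-1}=\alpha^{-1}-u\alpha^{-2}\beta$, and Lemma 2.5 guarantees that $x^{n}-\alpha_{0}^{-1}$ is irreducible in $\mathbb{F}_{p^{m}}[x]$ precisely when $x^{n}-\alpha_{0}$ is, so Theorem 3.3 applies verbatim to this second ring. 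By Proposition 2.4 the dual $\mathcal{C}^{\bot}$ is an $(\alpha^{-1}-u\alpha^{-2}\beta)$-constacyclic code, hence an ideal of this chain ring. Consequently $\mathcal{C}^{\bot}=\langle (x^{n}-\alpha_{0}^{-1})^{j}\rangle$ for a unique $j\in\{0,1,\ldots,2p^{s}\}$, and the entire problem collapses to identifying the exponent $j$.

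First I would pin down $j$ by a cardinality count. The ring $R=\mathbb{F}_{p^{m}}+u\mathbb{F}_{p^{m}}$ has $p^{2m}$ elements, and our codes have length $np^{s}$, so Proposition 2.3 yields $|\mathcal{C}|\cdot|\mathcal{C}^{\bot}|=p^{2mnp^{s}}$. From Theorem 3.3 we know $|\mathcal{C}|=|\langle (x^{n}-\alpha_{0})^{i}\rangle|=p^{mn(2p^{s}-i)}$, whence $|\mathcal{C}^{\bot}|=p^{2mnp^{s}-mn(2p^{s}-i)}=p^{mni}$.

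Next I would compute the size of the candidate ideal directly from the chain structure of the dual ring. Applying the cardinality formula of Theorem 3.3 in that ring gives $|\langle (x^{n}-\alpha_{0}^{-1})^{j}\rangle|=p^{mn(2p^{s}-j)}$. Equating this with $|\mathcal{C}^{\bot}|=p^{mni}$ forces $mn(2p^{s}-j)=mni$, i.e. $j=2p^{s}-i$, which is exactly the asserted formula $\mathcal{C}^{\bot}=\langle (x^{n}-\alpha_{0}^{-1})^{2p^{s}-i}\rangle$.

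I do not expect a genuine obstacle here: once the dual ring is known to be a chain ring, its ideals are totally ordered and distinguished solely by their cardinalities, so the counting argument is forced. The only point deserving care is the verification that the dual ring really is the chain ring described---this rests on the identity $\alpha_{0}^{'}=\alpha_{0}^{-1}$ together with the order equality $\mathrm{ord}(\alpha_{0})=\mathrm{ord}(\alpha_{0}^{-1})$ feeding into Lemma 2.5---but this has already been carried out in the discussion before the corollary, so the proof reduces to a bookkeeping of exponents.
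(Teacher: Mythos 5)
Your proposal is correct and takes essentially the same route as the paper: both identify $\mathcal{C}^{\bot}$ as an ideal $\langle (x^{n}-\alpha_{0}^{-1})^{j}\rangle$ of the chain ring $R[x]/\langle x^{np^{s}}-\alpha^{-1}-u\alpha^{-2}\beta\rangle$ (using the discussion preceding the corollary) and then determine $j=2p^{s}-i$ from the cardinality identity $|\mathcal{C}||\mathcal{C}^{\bot}|=|R|^{np^{s}}$ supplied by Proposition 2.3. Your added remark that the chain-ring structure makes the ideal uniquely determined by its cardinality is a correct and worthwhile explicit justification of a step the paper leaves implicit.
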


\begin{proof}
Let $\mathcal{C}^{\bot}=\langle (x^{n}-\alpha_{0}^{-1})^{j}\rangle\subseteq R[x]/\langle x^{np^{s}}-\alpha^{-1}-u\alpha^{-2}\beta\rangle$
be the dual code of the constacyclic code $\mathcal{C}=\langle (x^{n}-\alpha_{0})^{i}\rangle\subseteq R[x]/\langle x^{np^{s}}-\alpha-u\beta\rangle$,
where $0\leq i,j\leq2p^{s}$. By Proposition 2.3, $|\mathcal{C}||\mathcal{C}^{\bot}|=p^{mn(4p^{s}-i-j)}=|R|^{np^{s}}=p^{2mnp^{s}}$.
Therefore, $j=2p^{s}-i$.
\end{proof}

As a corollary of Theorem 3.3 and corollary 3.5, we can obtain the following result.

\begin{corollary}
The ideal $\langle u\rangle=\langle (x^{n}-\alpha_{0})^{p^{s}}\rangle$ is the
unique self-dual $(\alpha+u\beta)$-constacyclic code of length $np^{s}$ over $R$.
\end{corollary}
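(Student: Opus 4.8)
The plan is to combine the complete classification from Theorem 3.3 with the duality formula of Corollary 3.5, reducing the whole question to a single cardinality condition. By Theorem 3.3, every $(\alpha+u\beta)$-constacyclic code of length $np^{s}$ over $R$ has the form $\mathcal{C}=\langle(x^{n}-\alpha_{0})^{i}\rangle$ for a unique $i\in\{0,1,\ldots,2p^{s}\}$, and its number of codewords is $p^{mn(2p^{s}-i)}$. First I would invoke Corollary 3.5 to write the dual as $\mathcal{C}^{\bot}=\langle(x^{n}-\alpha_{0}^{-1})^{2p^{s}-i}\rangle$, whose codeword count is therefore $p^{mn\,i}$ (equivalently this follows from Proposition 2.3 together with $|R|=p^{2m}$).

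For uniqueness, observe that a self-dual code must satisfy $|\mathcal{C}|=|\mathcal{C}^{\bot}|$, i.e. $mn(2p^{s}-i)=mn\,i$, which forces $i=p^{s}$. Hence the only ideal that can possibly be self-dual is $\langle(x^{n}-\alpha_{0})^{p^{s}}\rangle$, and by Lemma 3.2 this equals $\langle u\rangle$. For existence I would specialize Corollary 3.5 at $i=p^{s}$: the dual of $\langle(x^{n}-\alpha_{0})^{p^{s}}\rangle$ is $\langle(x^{n}-\alpha_{0}^{-1})^{p^{s}}\rangle$, and applying the analogue of Lemma 3.2 inside $R[x]/\langle x^{np^{s}}-\alpha^{-1}-u\alpha^{-2}\beta\rangle$ (where $(x^{n}-\alpha_{0}^{-1})^{p^{s}}=x^{np^{s}}-\alpha^{-1}=u\alpha^{-2}\beta$ with $\alpha^{-2}\beta$ a unit) shows that this dual is again $\langle u\rangle$. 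Thus $\mathcal{C}^{\bot}=\mathcal{C}=\langle u\rangle$.

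The point that requires care, and where I expect the only real obstacle, is that $\mathcal{C}=\langle u\rangle$ sits in $\mathcal{R}$ while $\mathcal{C}^{\bot}$ a priori sits in the different ambient ring $R[x]/\langle x^{np^{s}}-\alpha^{-1}-u\alpha^{-2}\beta\rangle$, so the equality $\mathcal{C}=\mathcal{C}^{\bot}$ must be checked as an identity of subsets of $R^{np^{s}}$. The resolution is that, because $u^{2}=0$, every element of $\langle u\rangle$ is represented by a coefficient vector of the shape $(ua_{0},\ldots,ua_{np^{s}-1})$ with $a_{j}\in\mathbb{F}_{p^{m}}$; the reduction relation defining each ring only affects monomials of degree $\ge np^{s}$ and so plays no role here. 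Consequently $\langle u\rangle=u\,\mathbb{F}_{p^{m}}^{np^{s}}$ is literally the same subset of $R^{np^{s}}$ in both rings, so identifying the two occurrences of $\langle u\rangle$ is legitimate. Combining the uniqueness and existence steps then yields that $\langle u\rangle=\langle(x^{n}-\alpha_{0})^{p^{s}}\rangle$ is the unique self-dual $(\alpha+u\beta)$-constacyclic code of length $np^{s}$ over $R$.
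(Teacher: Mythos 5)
Your proof is correct, and the uniqueness half (forcing $i=p^{s}$ by comparing $|\mathcal{C}|=p^{mn(2p^{s}-i)}$ with $|\mathcal{C}^{\bot}|=p^{mni}$ via Corollary 3.5 and Proposition 2.3, then invoking Lemma 3.2) is essentially identical to the paper's. Where you genuinely diverge is the existence half. The paper never computes the dual of $\langle u\rangle$ explicitly: it shows self-orthogonality directly, by noting that any two codewords $u\mathbf{a},u\mathbf{b}\in\langle u\rangle$ have inner product $u^{2}\sum a_{i}b_{i}=0$, and then concludes $\langle u\rangle=\langle u\rangle^{\bot}$ from the cardinality identity $|\langle u\rangle^{\bot}|=|R|^{np^{s}}/|\langle u\rangle|=p^{mnp^{s}}=|\langle u\rangle|$. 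You instead specialize Corollary 3.5 at $i=p^{s}$, identify $\langle (x^{n}-\alpha_{0}^{-1})^{p^{s}}\rangle=\langle u\rangle$ in the dual ambient ring, and then carefully argue that $\langle u\rangle=(u\mathbb{F}_{p^{m}})^{np^{s}}$ is literally the same subset of $R^{np^{s}}$ in both quotient rings. That last identification is the one point your route needs and the paper's does not (since the paper works entirely inside $R^{np^{s}}$ from the start), and you supply it correctly: because $u^{2}=0$, the ideal $\langle u\rangle$ consists exactly of the vectors $u\,\mathbb{F}_{p^{m}}^{np^{s}}$ regardless of which unit defines the constacyclic reduction. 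The paper's argument is slightly more elementary and self-contained; yours has the advantage of exhibiting the dual code explicitly and of making visible the ambient-ring subtlety that the statement of Corollary 3.5 otherwise hides. Both are complete.
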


\begin{proof}
Let $ua(x)$, $ub(x)$ be two arbitrary elements of the ideal $\langle u\rangle$.
Denote $a(x)$ and $b(x)$ are corresponding to the codewords $\mathbf{a}=(a_{0},a_{1},...,a_{np^{s}-1})\in R^{np^{s}}$,
$\mathbf{b}=(b_{0},b_{1},...,b_{np^{s}-1})\in R^{np^{s}}$ respectively. Note that $u^{2}=0$. Then
\begin{center}
  $\langle u\mathbf{a},u\mathbf{b}\rangle=u^{2}\sum\limits_{i=0}^{n-1}a_{i}b_{i}=0$.
\end{center}
This implies that $\langle u\rangle\subseteq\langle u\rangle^{\bot}$. By Proposition 2.3, we have
\begin{center}
  $|\langle u\rangle^{\bot}|=\frac{|R|^{np^{s}}}{|\langle u\rangle|}=\frac{p^{2mnp^{s}}}{p^{mnp^{s}}}=p^{mnp^{s}}=|\langle u\rangle|$.
\end{center}
This means $\langle u\rangle=\langle u\rangle^{\bot}$.
Hence $\langle u\rangle$ is a self-dual $(\alpha+u\beta)$-constacyclic code.
Now suppose that $\mathcal{C}=\langle (x^{n}-\alpha_{0})^{i}\rangle\subseteq\mathcal{R}$ is a self-dual $(\alpha+u\beta)$-constacyclic code of length $np^{s}$ over $R$.It follows from Corollary 3.5 that
\begin{center}
  $|\langle (x^{n}-\alpha_{0})^{i}\rangle|=|\mathcal{C}|=|\mathcal{C}^{\bot}|=|\langle (x^{n}-\alpha_{0}^{-1})^{2p^{s}-i}\rangle|$.
\end{center}
Thus $p^{mn(2p^{s}-i}=p^{mni}$, i.e., $i=p^{s}$.
We have $\mathcal{C}=\langle (x^{n}-\alpha_{0})^{p^{s}}\rangle=\langle u\rangle$ by Lemma 3.2.
Then the uniqueness is proved.
\end{proof}

\section{The case that $x^{n}-\alpha_{0}$ is reducible in $\mathbb{F}_{p^{m}}[x]$}

We now consider the case that $x^{n}-\alpha_{0}$ is reducible in $\mathbb{F}_{p^{m}}[x]$. Since $\gcd(n,p)=1$, $x^{n}-\alpha_{0}$ has no repeated
fators. Let $f_{1}(x)$,$f_{2}(x)$,...,$f_{r}(x)$ be pairwise coprime monic irreducible polynomials
in $\mathbb{F}_{p^{m}}[x]$ such that $x^{n}-\alpha_{0}=f_{1}(x)f_{2}(x)\cdot\cdot\cdot f_{r}(x)$. Then,
\begin{center}
  $x^{np^{s}}-\alpha-u\beta=f_{1}(x)^{p^{s}}f_{2}(x)^{p^{s}}\cdot\cdot\cdot f_{r}(x)^{p^{s}}-u\beta$.
\end{center}
Since $f_{1}(x)^{p^{s}}$ and $f_{2}(x)^{p^{s}}\cdot\cdot\cdot f_{r}(x)^{p^{s}}$ are coprime in $\mathbb{F}_{p^{m}}[x]$,
we have there exist $\nu_{1}(x),\omega_{1}(x)\in\mathbb{F}_{p^{m}}[x]$ such that
\begin{center}
  $\nu_{1}(x)f_{1}(x)^{p^{s}}+\omega_{1}(x)f_{2}(x)^{p^{s}}\cdot\cdot\cdot f_{r}(x)^{p^{s}}=1$,
\end{center}
 where $deg(\omega_{1})<deg(f_{1}^{p^{s}})$. Thus, we obtain
\begin{center}
  $x^{np^{s}}-\alpha-u\beta=(f_{1}(x)^{p^{s}}-u\beta\omega_{1}(x))(f_{2}(x)^{p^{s}}\cdot\cdot\cdot f_{r}(x)^{p^{s}}-u\beta\nu_{1}(x))$.
\end{center}
It is a routine to verify that $f_{1}(x)^{p^{s}}-u\beta\omega_{1}(x)$ and $f_{2}(x)^{p^{s}}\cdot\cdot\cdot f_{r}(x)^{p^{s}}-u\beta\nu_{1}(x)$
are coprime in $R[x]$. Next we consider the factorization of
$f_{2}(x)^{p^{s}}\cdots f_{r}(x)^{p^{s}}-u\beta\nu_{1}(x)$. Since
$f_{2}(x)^{p^{s}}$ and $f_{3}(x)^{p^{s}}\cdot\cdot\cdot f_{r}(x)^{p^{s}}$ are coprime in $F_{p^{m}}[x]$, there exist
$\nu_{2}(x),\omega_{2}(x)\in\mathbb{F}_{p^{m}}[x]$ such that $deg(\omega_{2})<deg(f_{2}^{p^{s}})$ and
\begin{center}
  $\nu_{2}(x)f_{2}(x)^{p^{s}}+\omega_{2}(x)f_{3}(x)^{p^{s}}\cdot\cdot\cdot f_{r}(x)^{p^{s}}=1$.
\end{center}
Thus,

\begin{align*}
   &  f_{2}(x)^{p^{s}}\cdots f_{r}(x)^{p^{s}}-u\beta\nu_{1}(x)\\
  = & (f_{2}(x)^{p^{s}}-u\beta\nu_{1}(x)\omega_{2}(x))(f_{3}(x)^{p^{s}}\cdots f_{r}(x)^{p^{s}}-u\beta\nu_{1}(x)\nu_{2}(x)).
\end{align*}
\noindent Repeating this process, we get
\begin{eqnarray*}
% \nonumber to remove numbering (before each equation)
  x^{np^{s}}-\alpha-u\beta &=& (f_{1}(x)^{p^{s}}-u\beta\omega_{1}(x))(f_{2}(x)^{p^{s}}-u\beta\nu_{1}(x)\omega_{2}(x)) \\
   && \cdots(f_{r}(x)^{p^{s}}-u\beta\nu_{1}(x)\nu_{2}(x)\cdots\nu_{r-1}(x)).
\end{eqnarray*}
\noindent Let $h_{j}(x)=f_{j}(x)^{p^{s}}+ug_{j}(x)$ ($1\leq j\leq r$) where $g_{1}(x)=-\beta\omega_{1}(x)$,
$g_{j}(x)=-\beta\nu_{1}(x)\cdot\cdot\cdot\nu_{j-1}(x)\omega_{j}(x)$, for $2\leq j\leq r-1$,
$g_{r}(x)=-\beta\nu_{1}(x)\cdot\cdot\cdot\nu_{r-1}(x)$. Then $x^{np^{s}}-\alpha-u\beta=h_{1}(x)h_{2}(x)\cdot\cdot\cdot h_{r}(x)$ and $h_{1}(x)$,$h_{2}(x)$,...,$h_{r}(x)$
are pairwise coprime in $R[x]$.

Using the notations above, we obtain the following result.
\begin{lemma}
For any $1\leq i\leq r$, $f_{i}(x)$ and $g_{i}(x)$ are coprime in $\mathbb{F}_{p^{m}}[x]$.
\end{lemma}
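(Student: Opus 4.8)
The plan is to use that each $f_i$ is irreducible, hence prime, in $\mathbb{F}_{p^m}[x]$, so that $f_i$ and $g_i$ are coprime exactly when $f_i\nmid g_i$. Each $g_i$ is, up to the unit $-\beta$, a product of finitely many of the polynomials $\nu_1,\ldots,\nu_{r-1}$ and $\omega_1,\ldots,\omega_{r-1}$; since $f_i$ is prime, it divides this product if and only if it divides one of the factors. So the entire statement reduces to showing that $f_i$ divides none of the factors occurring in $g_i$.

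First I would record, directly from the definitions, which factors appear: for $1\le i\le r-1$ they are $\nu_1,\ldots,\nu_{i-1}$ together with $\omega_i$, and for $i=r$ they are $\nu_1,\ldots,\nu_{r-1}$. Thus the claim amounts to two divisibility facts: (a) $f_i\nmid\nu_j$ whenever $j<i$, and (b) $f_i\nmid\omega_i$ whenever $i\le r-1$. These two together cover every factor, including the boundary cases $i=1$ (where only (b) is needed) and $i=r$ (where only (a) is needed).

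The key device for both is to reduce the relevant B\'ezout relation $\nu_k f_k^{p^s}+\omega_k f_{k+1}^{p^s}\cdots f_r^{p^s}=1$ modulo $f_i$. For (a) I take $k=j$ with $j<i\le r$; then $f_i^{p^s}$ is one of the factors in $f_{j+1}^{p^s}\cdots f_r^{p^s}$, so that term vanishes modulo $f_i$ and I get $\nu_j f_j^{p^s}\equiv 1 \pmod{f_i}$, whence $\nu_j$ is a unit modulo $f_i$ and in particular $f_i\nmid\nu_j$. For (b) I take $k=i$; now the first summand $\nu_i f_i^{p^s}$ vanishes modulo $f_i$, leaving $\omega_i f_{i+1}^{p^s}\cdots f_r^{p^s}\equiv 1 \pmod{f_i}$, so $f_i\nmid\omega_i$.

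The computations are routine and the constant $-\beta$ plays no role since it is a unit. I do not expect a genuine obstacle here; the only point requiring care is the bookkeeping---choosing, for each target factor, the correct B\'ezout relation to reduce modulo $f_i$, and verifying that the boundary indices $i=1$ and $i=r$ are handled by the single pair of facts (a) and (b).
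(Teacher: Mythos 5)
Your proof is correct, but it takes a genuinely different route from the paper's. The paper does not unpack the factors of $g_i$ at all: it simply expands the product $x^{np^s}-\alpha-u\beta=\prod_{i=1}^{r}\bigl(f_i(x)^{p^s}+ug_i(x)\bigr)$ using $u^2=0$, so that the coefficient of $u$ gives $\sum_{i}g_i\prod_{j\neq i}f_j^{p^s}=-\beta$; isolating the $i$-th term (every other term is divisible by $f_i$) yields a single B\'ezout-type identity $v_i g_i+w_i f_i=-\beta$, and coprimality follows immediately because $-\beta$ is a nonzero constant. That argument needs nothing beyond $\beta\neq 0$ and in particular does not invoke the irreducibility of $f_i$. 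Your argument instead exploits the recursive construction of the $g_i$: you use that $f_i$ is irreducible, hence prime, to reduce the claim to showing $f_i$ divides none of the individual factors $\nu_1,\ldots,\nu_{i-1},\omega_i$ (resp.\ $\nu_1,\ldots,\nu_{r-1}$ for $i=r$), and you verify each non-divisibility by reducing the appropriate relation $\nu_k f_k^{p^s}+\omega_k f_{k+1}^{p^s}\cdots f_r^{p^s}=1$ modulo $f_i$. Your bookkeeping of which factors occur and which B\'ezout relation to reduce in each case is accurate, including the boundary cases $i=1$ and $i=r$, so the proof is complete. What the paper's approach buys is brevity and independence from irreducibility; what yours buys is more explicit information, namely that each individual factor $\nu_j$ (for $j<i$) and $\omega_i$ is actually a unit modulo $f_i$, not merely that the product is coprime to $f_i$.
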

\begin{proof}
As above, $x^{np^{s}}-\alpha-u\beta=\prod\limits_{i=1}^{r}(f_{i}(x)^{p^{s}}+ug_{i}(x))$.
Expanding the right side of the equation and comparing two sides of the equation, we can obtain that
$v_{i}(x)g_{i}(x)+w_{i}(x)f_{i}(x)=-\beta$ for some $v_{i}(x), w_{i}(x)\in\mathbb{F}_{p^{m}}[x]$ and for all $1\leq i\leq r$.
The conclusion is obtained from $\beta\neq 0$.
\end{proof}

For any integer $j$, $1\leq j\leq r$, we assume that $\deg(f_{j}(x))=d_{j}$ and denote
$H_{j}(x)=h_{1}(x)\cdots h_{j-1}(x)h_{j+1}(x)\cdots h_{r}(x)$. It is obvious that $H_{j}(x)$ and
$h_{j}(x)$ are copime in $R[x]$. Hence there exist $s_{j}(x),t_{j}(x)\in R[x]$ such that
\begin{center}
  $s_{j}(x)H_{j}(x)+t_{j}(x)h_{j}(x)=1$.
\end{center}
Let $\varepsilon_{j}(x)=s_{j}(x)H_{j}(x)\mod(x^{np^{s}}-\alpha-u\beta)$.
Then by the Chinese remainder theorem for commutative rings with identity, we deduce the following conclusion.
\begin{lemma}
In the ring $\mathcal{R}$, the following statements hold for all $1\leq j\neq l\leq r$:

($i$)~$\varepsilon_{1}(x)+\cdot\cdot\cdot+\varepsilon_{r}(x)=1$;

($ii$)~$\varepsilon_{j}(x)^{2}=\varepsilon_{j}(x)$;

($iii$)~$\varepsilon_{j}(x)\varepsilon_{l}(x)=0$.
\end{lemma}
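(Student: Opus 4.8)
The plan is to exploit the pairwise coprime factorization $x^{np^{s}}-\alpha-u\beta=h_{1}(x)\cdots h_{r}(x)$ in $R[x]$ together with the Chinese remainder theorem, reducing every assertion to a computation modulo each single factor $h_{k}(x)$. The whole argument hinges on one pair of congruences for the $\varepsilon_{j}(x)$, after which all three parts are formal consequences.

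First I would record those two congruences. Since $h_{l}(x)\mid H_{j}(x)$ whenever $l\neq j$, reducing $\varepsilon_{j}(x)=s_{j}(x)H_{j}(x)$ modulo $h_{l}(x)$ gives $\varepsilon_{j}(x)\equiv0\ (\mathrm{mod}\ h_{l}(x))$ for $l\neq j$. On the other hand, the B\'ezout identity $s_{j}(x)H_{j}(x)+t_{j}(x)h_{j}(x)=1$ yields $\varepsilon_{j}(x)=1-t_{j}(x)h_{j}(x)\equiv1\ (\mathrm{mod}\ h_{j}(x))$. In other words, $\varepsilon_{j}(x)\equiv\delta_{jl}\ (\mathrm{mod}\ h_{l}(x))$ for every pair $(j,l)$, where $\delta_{jl}$ is the Kronecker symbol. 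These congruences are the engine for all three statements.

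Next I would invoke the key consequence of pairwise coprimeness established in the construction preceding the lemma: the ideals $\langle h_{1}(x)\rangle,\ldots,\langle h_{r}(x)\rangle$ are pairwise comaximal in $R[x]$, so their intersection equals their product $\langle x^{np^{s}}-\alpha-u\beta\rangle$; equivalently, any element of $R[x]$ divisible by every $h_{k}(x)$ is divisible by the full product and hence vanishes in $\mathcal{R}$. Each identity then follows by verifying that the relevant difference lies in $\langle h_{k}(x)\rangle$ for all $k$. For part ($i$), modulo $h_{k}(x)$ only the term $\varepsilon_{k}(x)\equiv1$ survives in $\sum_{j}\varepsilon_{j}(x)$, so $\sum_{j}\varepsilon_{j}(x)-1\equiv0\ (\mathrm{mod}\ h_{k}(x))$ for every $k$. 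For part ($ii$), writing $\varepsilon_{j}(x)^{2}-\varepsilon_{j}(x)=\varepsilon_{j}(x)\bigl(\varepsilon_{j}(x)-1\bigr)$, the factor $\varepsilon_{j}(x)-1$ vanishes modulo $h_{j}(x)$ while the factor $\varepsilon_{j}(x)$ vanishes modulo $h_{k}(x)$ for $k\neq j$. For part ($iii$) with $j\neq l$, the factor $\varepsilon_{l}(x)$ vanishes modulo $h_{j}(x)$, the factor $\varepsilon_{j}(x)$ vanishes modulo $h_{l}(x)$, and both vanish modulo any other $h_{k}(x)$. In each case the difference lies in every $\langle h_{k}(x)\rangle$ and is therefore $0$ in $\mathcal{R}$.

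There is no deep obstacle here; the only point requiring care is that the Chinese remainder machinery is applied over $R[x]$ with $R$ a chain ring rather than a field, so I must rely on the genuine comaximality $R[x]=\langle h_{j}(x)\rangle+\langle h_{l}(x)\rangle$ (for $j\neq l$) to secure $\bigcap_{k}\langle h_{k}(x)\rangle=\langle\prod_{k}h_{k}(x)\rangle$. Since the pairwise coprimeness of the $h_{k}(x)$ in $R[x]$ was already proved in the factorization leading up to the lemma, this step is legitimate. The remaining bookkeeping is simply to read all congruences in $\mathcal{R}$, since $\varepsilon_{j}(x)$ is by definition the residue of $s_{j}(x)H_{j}(x)$ modulo $x^{np^{s}}-\alpha-u\beta$.
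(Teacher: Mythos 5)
Your proof is correct and follows essentially the same route as the paper: both arguments rest on the pairwise coprime factorization $x^{np^{s}}-\alpha-u\beta=h_{1}(x)\cdots h_{r}(x)$ and the resulting congruences $\varepsilon_{j}(x)\equiv\delta_{jl}\pmod{h_{l}(x)}$, with the passage from ``divisible by every $h_{k}$'' to ``zero in $\mathcal{R}$'' secured by comaximality (the paper cites its Proposition~2.7 for the same purpose). The only cosmetic difference is that you treat parts ($ii$) and ($iii$) by the same uniform mod-each-factor reduction, whereas the paper verifies ($ii$) by direct expansion of $(1-t_{j}h_{j})^{2}$ using $h_{j}\equiv t_{j}h_{j}^{2}$; both computations are equivalent.
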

\begin{proof}
($i$)~By the definition of $\varepsilon_{j}(x)$, we have
\begin{center}
  $\varepsilon_{1}+\varepsilon_{2}+\cdot\cdot\cdot+\varepsilon_{r}\equiv \sum\limits_{i\neq j}s_{i}H_{i}+1-t_{j}h_{j}~(\!\!\!\!\mod x^{np^{s}}-\alpha-u\beta)$.
\end{center}
Thus $\varepsilon_{1}+\varepsilon_{2}+\cdot\cdot\cdot+\varepsilon_{r}\equiv 1~(\!\!\!\!\mod h_{j}), 1\leq j\leq r$.
By Proposition 2.7,
\begin{center}
$\varepsilon_{1}+\varepsilon_{2}+\cdot\cdot\cdot+\varepsilon_{r}\equiv 1~(\!\!\!\!\mod x^{np^{s}}-\alpha-u\beta)$.
\end{center}
 Therefore, the first statement is valid.

($ii$)~Since
\begin{center}
$0\equiv s_{j}H_{j}h_{j}\equiv h_{j}-t_{j}h_{j}^{2}~(\!\!\!\!\mod x^{np^{s}}-\alpha-u\beta)$,
\end{center}
 we have
 \begin{center}
 $h_{j}\equiv t_{j}h_{j}^{2}~(\!\!\!\!\mod x^{np^{s}}-\alpha-u\beta)$.
 \end{center}
Thus
\begin{center}
$\varepsilon_{j}^{2}\equiv1-2t_{j}h_{j}+t_{j}^{2}h_{j}^{2}\equiv 1-t_{j}h_{j}\equiv \varepsilon_{j}~(\!\!\!\!\mod x^{np^{s}}-\alpha-u\beta)$.
\end{center}

($iii$)~It is obvious.
\end{proof}

By Lemma 4.2, we get another expression of $\mathcal{R}$:
\begin{lemma}
$\mathcal{R}=\mathcal{R}_{1}\oplus\cdot\cdot\cdot\oplus\mathcal{R}_{r}$, where $\mathcal{R}_{j}=\mathcal{R}\varepsilon_{j}(x)$
with $\varepsilon_{j}(x)$ as its multiplicative identity.
\end{lemma}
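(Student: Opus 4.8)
The plan is to extract everything from the three idempotent relations already established in Lemma 4.2, since those are exactly the data needed for a Peirce-type decomposition of $\mathcal{R}$ relative to a complete system of orthogonal idempotents. Concretely, I would treat the claim as three separate assertions: that the $\mathcal{R}_j$ span $\mathcal{R}$, that their sum is direct, and that $\varepsilon_j(x)$ serves as the identity of $\mathcal{R}_j$. No genuinely hard step arises here; the entire content sits in Lemma 4.2, and the work is purely formal manipulation of the $\varepsilon_j(x)$.

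First I would show $\mathcal{R}=\mathcal{R}_1+\cdots+\mathcal{R}_r$. Given any $a(x)\in\mathcal{R}$, part (i) of Lemma 4.2 lets me write
\begin{center}
$a(x)=a(x)\cdot 1=a(x)\bigl(\varepsilon_1(x)+\cdots+\varepsilon_r(x)\bigr)=a(x)\varepsilon_1(x)+\cdots+a(x)\varepsilon_r(x)$,
\end{center}
and since each summand $a(x)\varepsilon_j(x)$ lies in $\mathcal{R}_j=\mathcal{R}\varepsilon_j(x)$, the spanning is immediate. To see the sum is direct, I would take an arbitrary relation $b_1(x)+\cdots+b_r(x)=0$ with $b_j(x)\in\mathcal{R}_j$, so $b_j(x)=a_j(x)\varepsilon_j(x)$ for some $a_j(x)$. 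Multiplying through by a fixed $\varepsilon_l(x)$ and invoking orthogonality (iii) to kill the off-diagonal terms $b_j(x)\varepsilon_l(x)=a_j(x)\varepsilon_j(x)\varepsilon_l(x)=0$ for $j\neq l$, together with idempotency (ii) giving $b_l(x)\varepsilon_l(x)=a_l(x)\varepsilon_l(x)^2=a_l(x)\varepsilon_l(x)=b_l(x)$, I conclude $b_l(x)=0$ for every $l$. Hence each component is forced to vanish and the representation is unique, i.e. the sum is direct.

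Finally I would verify the identity claim: for any $b(x)=a(x)\varepsilon_j(x)\in\mathcal{R}_j$, relation (ii) yields $b(x)\varepsilon_j(x)=a(x)\varepsilon_j(x)^2=a(x)\varepsilon_j(x)=b(x)$, so $\varepsilon_j(x)$ acts as a two-sided multiplicative identity on $\mathcal{R}_j$ (two-sidedness being automatic since $\mathcal{R}$ is commutative). I would also remark that each $\mathcal{R}_j$ is an ideal of $\mathcal{R}$ and is itself a commutative ring under the inherited operations with unit $\varepsilon_j(x)$, so the displayed decomposition is a ring direct sum, not merely a module one. The only point demanding care — and it is minor — is to keep in mind that the identity of each summand $\mathcal{R}_j$ is $\varepsilon_j(x)$ rather than $1$, which is why the idempotent and orthogonality relations of Lemma 4.2, rather than any further structural input, suffice to close the argument.
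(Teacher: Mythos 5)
Your proposal is correct and follows essentially the same route as the paper: use Lemma 4.2(i) to write any element as $\sum_j a(x)\varepsilon_j(x)$ for the spanning, and orthogonality plus idempotency to force each component of a zero sum to vanish. You are in fact slightly more careful than the paper, which invokes only part (iii) for directness where idempotency (ii) is also implicitly needed, and you make the identity-element verification explicit.
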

\begin{proof}
Since $\mathcal{R}_{j}=\mathcal{R}\varepsilon_{j}(x)$ is a subring of $\mathcal{R}$,
$\mathcal{R}_{1}+\cdots+\mathcal{R}_{r}\subseteq\mathcal{R}$. For any $f(x)\in\mathcal{R}$,
$f(x)=f(x)\varepsilon_{1}(x)+\cdots+f(x)\varepsilon_{r}(x)$. Therefore
$f(x)\in\mathcal{R}_{1}+\cdots+\mathcal{R}_{r}$, that is, $\mathcal{R}\subseteq\mathcal{R}_{1}+\cdots+\mathcal{R}_{r}$.
Let $0=a_{1}(x)\varepsilon_{1}(x)+a_{2}(x)\varepsilon_{2}(x)+\cdot\cdot\cdot+a_{r}(x)\varepsilon_{r}(x)$,
where $a_{1}(x),a_{2}(x),\ldots,a_{r}(x)\in\mathcal{R}$. It follows from
Lemma 4.2 ($iii$) that $0=a_{j}(x)\varepsilon_{j}(x)$ for all $1\leq j\leq r$. Thus $\mathcal{R}=\mathcal{R}_{1}\oplus\cdots\oplus\mathcal{R}_{r}$.
\end{proof}

Denote $\mathcal{K}_{j}=R[x]/\langle h_{j}(x)\rangle$. We have the following isomorphism.
\begin{lemma}
For any integer $j\in\{1,2,\ldots,r\}$, $\mathcal{K}_{j}$ and $\mathcal{R}_{j}$ are isomorphic as rings.
\end{lemma}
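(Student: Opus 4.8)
The plan is to exhibit an explicit ring homomorphism from $R[x]$ onto $\mathcal{R}_j$ whose kernel is exactly $\langle h_j(x)\rangle$, and then to invoke the first isomorphism theorem to conclude $\mathcal{R}_j\cong R[x]/\langle h_j(x)\rangle=\mathcal{K}_j$. This avoids having to check well-definedness of a map out of $\mathcal{K}_j$ by hand, since the kernel computation does that work for us.

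First I would define $\phi_j\colon R[x]\to\mathcal{R}_j$ by letting $\phi_j(a(x))$ be the image of $a(x)$ in $\mathcal{R}$ multiplied by $\varepsilon_j(x)$, i.e. $\phi_j(a(x))=a(x)\varepsilon_j(x)\bmod(x^{np^{s}}-\alpha-u\beta)$. Using the idempotency $\varepsilon_j(x)^{2}=\varepsilon_j(x)$ from Lemma 4.2($ii$), one checks that $\phi_j$ is additive and multiplicative, since $\phi_j(a)\phi_j(b)=ab\,\varepsilon_j^{2}=ab\,\varepsilon_j=\phi_j(ab)$, and that it sends $1$ to $\varepsilon_j(x)$, the multiplicative identity of $\mathcal{R}_j$; hence $\phi_j$ is a ring homomorphism. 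It is surjective because every element of $\mathcal{R}_j=\mathcal{R}\varepsilon_j(x)$ has the form $f(x)\varepsilon_j(x)$ with $f(x)\in\mathcal{R}$, which is $\phi_j$ applied to any lift of $f(x)$ to $R[x]$.

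The core of the argument is the computation of $\ker\phi_j$. Because $h_1(x),\ldots,h_r(x)$ are pairwise coprime in $R[x]$, the ideals $\langle h_i(x)\rangle$ are pairwise comaximal, so their product $\langle x^{np^{s}}-\alpha-u\beta\rangle=\langle\prod_i h_i(x)\rangle$ equals their intersection $\bigcap_i\langle h_i(x)\rangle$. Consequently $\phi_j(a)=0$ if and only if $a\varepsilon_j\equiv0\pmod{h_i}$ for every $i$. From $\varepsilon_j(x)\equiv s_j(x)H_j(x)$ together with the Bezout identity $s_j(x)H_j(x)+t_j(x)h_j(x)=1$ and the divisibility $h_i(x)\mid H_j(x)$ for $i\neq j$, I read off $\varepsilon_j\equiv1\pmod{h_j}$ and $\varepsilon_j\equiv0\pmod{h_i}$ for $i\neq j$. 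Thus for $i\neq j$ the congruence $a\varepsilon_j\equiv0\pmod{h_i}$ holds automatically, while for $i=j$ it reduces to $a\equiv0\pmod{h_j}$. Hence $\ker\phi_j=\langle h_j(x)\rangle$, and the first isomorphism theorem gives $\mathcal{R}_j\cong R[x]/\langle h_j(x)\rangle=\mathcal{K}_j$.

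The main obstacle I anticipate is precisely this kernel computation, and in particular the reduction from a single congruence modulo the product $x^{np^{s}}-\alpha-u\beta$ to the system of congruences modulo the individual $h_i(x)$. Since $R=\mathbb{F}_{p^{m}}+u\mathbb{F}_{p^{m}}$ is only a finite chain ring rather than a field, $R[x]$ is not a unique factorization domain, so I cannot argue by comparing irreducible factors; the argument must instead rest on the pairwise comaximality of the $\langle h_i(x)\rangle$ (so that product equals intersection), as guaranteed by the coprimality established just before Lemma 4.1, together with the idempotent and orthogonality relations of Lemma 4.2. I would also take care to treat $\varepsilon_j(x)$ consistently as the fixed representative $s_j(x)H_j(x)$ when reducing modulo each $h_i(x)$, so that the two defining congruences of $\varepsilon_j(x)$ are applied unambiguously.
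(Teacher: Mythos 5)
Your proof is correct and follows essentially the same route as the paper: you use the identical map $a(x)\mapsto \varepsilon_j(x)a(x)\bmod(x^{np^{s}}-\alpha-u\beta)$ and the same two key congruences $h_j(x)\varepsilon_j(x)\equiv 0$ and $\varepsilon_j(x)\equiv 1-t_j(x)h_j(x)$, merely repackaging the paper's well-definedness and injectivity checks as a kernel computation plus the first isomorphism theorem. The comaximality/product-equals-intersection step you single out as the main obstacle is in fact dispensable: for $\ker\phi_j\subseteq\langle h_j(x)\rangle$ the one-way divisibility $h_j(x)\mid x^{np^{s}}-\alpha-u\beta$ already lets you reduce modulo $h_j(x)$ alone, which is exactly what the paper does.
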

\begin{proof}
We define a mapping $\phi_{j}: \mathcal{K}_{j}\rightarrow\mathcal{R}_{j}$ as follows:
\begin{center}
$a(x)\mapsto\varepsilon_{j}(x)a(x) \mod(x^{np^{s}}-\alpha-u\beta)$.
\end{center}
For $a(x), b(x)\in\mathcal{K}_{j}$. If $a(x)=b(x)$, then there exists $q(x)\in R[x]$
such that
\begin{center}
$a(x)-b(x)=q(x)h_{j}(x)$.
\end{center}
 Thus,
 \begin{center}
 $a(x)\varepsilon_{j}(x)-b(x)\varepsilon_{j}(x)=q(x)h_{j}(x)\varepsilon_{j}(x)$.
 \end{center}
Since $h_{j}(x)\varepsilon_{j}(x)\equiv s_{j}H_{j}h_{j}\equiv0 \mod(x^{np^{s}}-\alpha-u\beta)$, we get
$\phi_{j}(a)=\phi_{j}(b)$, which means $\phi_{j}$ is well-defined.
If $\phi_{j}(a)=\phi_{j}(b)$, there exist $q^{'}(x)\in R[x]$ such that
$a(x)\varepsilon_{j}(x)-b(x)\varepsilon_{j}(x)=q^{'}(x)(x^{np^{s}}-\alpha-u\beta)$.
It follows from $\varepsilon_{j}(x)\equiv 1-t_{j}(x)h_{j}(x) \mod(x^{np^{s}}-\alpha-u\beta)$ that
$a(x)\equiv b(x)\mod h_{j}(x)$. Then $\phi_{j}$ is injection.
It is obvious that $\phi_{j}$
is a surjective ring homomorphism. Therefore, $\phi_{j}$ is a ring isomorphism.
\end{proof}

We can construct a mapping $\phi$ from $\mathcal{K}_{1}\times\cdots\times\mathcal{K}_{r}$ onto $\mathcal{R}$
via $\phi_{j}$ as follows:
\begin{center}
  $\phi(a_{1}(x),\ldots,a_{r}(x))=\sum\limits_{j=1}^{r}\phi_{j}(a_{j}(x))=\sum\limits_{j=1}^{r}\varepsilon_{j}(x)a_{j}(x)\mod(x^{np^{s}}-\alpha-u\beta)$.
\end{center}
It is easy to verify that $\phi$ is a ring isomorphism. Therefore, we have the following result.
\begin{lemma}
$\mathcal{K}_{1}\times\cdot\cdot\cdot\times\mathcal{K}_{r}$ is isomorphic to $\mathcal{R}$.
 \end{lemma}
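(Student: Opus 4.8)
The plan is to verify that the explicitly constructed map $\phi$ is a ring isomorphism by exploiting the two structural facts already established: the internal direct sum decomposition $\mathcal{R}=\mathcal{R}_{1}\oplus\cdots\oplus\mathcal{R}_{r}$ from Lemma 4.3, and the component isomorphisms $\phi_{j}:\mathcal{K}_{j}\to\mathcal{R}_{j}$ from Lemma 4.5. Since $\phi$ is defined by $\phi(a_{1},\ldots,a_{r})=\sum_{j}\phi_{j}(a_{j})=\sum_{j}\varepsilon_{j}a_{j}$, the whole task reduces to checking that assembling the $\phi_{j}$ across the direct sum preserves the ring operations and is bijective.

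First I would confirm that $\phi$ is a ring homomorphism. Additivity is immediate from the additivity of each $\phi_{j}$ and distributivity in $\mathcal{R}$, and the identity $(1,\ldots,1)$ maps to $\sum_{j}\varepsilon_{j}=1$ by Lemma 4.2($i$). The only step requiring care is multiplicativity: the product in $\mathcal{K}_{1}\times\cdots\times\mathcal{K}_{r}$ is componentwise, so $\phi$ applied to a product equals $\sum_{j}\varepsilon_{j}a_{j}b_{j}$, whereas the product of the images is $\bigl(\sum_{j}\varepsilon_{j}a_{j}\bigr)\bigl(\sum_{l}\varepsilon_{l}b_{l}\bigr)=\sum_{j,l}\varepsilon_{j}\varepsilon_{l}a_{j}b_{l}$. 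These agree precisely because the $\varepsilon_{j}$ are orthogonal idempotents: by Lemma 4.2($ii$),($iii$) we have $\varepsilon_{j}\varepsilon_{l}=0$ for $j\neq l$ and $\varepsilon_{j}^{2}=\varepsilon_{j}$, so the cross terms vanish and the double sum collapses to $\sum_{j}\varepsilon_{j}a_{j}b_{j}$. This is the heart of the argument and the only place where the idempotent relations are genuinely used.

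For bijectivity, surjectivity follows from Lemma 4.3: any $f\in\mathcal{R}$ decomposes as $f=\sum_{j}f\varepsilon_{j}$ with $f\varepsilon_{j}\in\mathcal{R}_{j}$, and since each $\phi_{j}$ maps onto $\mathcal{R}_{j}$ by Lemma 4.5, $f$ lies in the image of $\phi$. For injectivity I would suppose $\phi(a_{1},\ldots,a_{r})=\sum_{j}\varepsilon_{j}a_{j}=0$ and multiply through by $\varepsilon_{l}$; using $\varepsilon_{l}^{2}=\varepsilon_{l}$ and $\varepsilon_{j}\varepsilon_{l}=0$ for $j\neq l$, this isolates $\varepsilon_{l}a_{l}=\phi_{l}(a_{l})=0$, whence $a_{l}=0$ in $\mathcal{K}_{l}$ by the injectivity of $\phi_{l}$. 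As every coordinate vanishes, $\phi$ is injective. Alternatively one can bypass this step by a cardinality count comparing $\prod_{j}|\mathcal{K}_{j}|$ with $|\mathcal{R}|$, but the idempotent argument is cleaner and self-contained.

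I expect no serious obstacle here: given Lemmas 4.2, 4.3 and 4.5, the result is essentially the Chinese remainder theorem reassembled for the factorization $x^{np^{s}}-\alpha-u\beta=h_{1}(x)\cdots h_{r}(x)$ into pairwise coprime factors. The subtlest point is simply keeping straight that componentwise multiplication in the product ring corresponds, under $\phi$, to the orthogonality-forced collapse of cross terms in $\mathcal{R}$; everything else is bookkeeping already carried out in the preceding lemmas.
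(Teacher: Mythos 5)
Your proof is correct and follows exactly the route the paper intends: the paper defines the same map $\phi(a_{1},\ldots,a_{r})=\sum_{j}\varepsilon_{j}(x)a_{j}(x)$ and simply asserts that ``it is easy to verify that $\phi$ is a ring isomorphism,'' whereas you supply the verification in full, with multiplicativity resting on the orthogonal idempotent relations of Lemma 4.2 and bijectivity on Lemmas 4.3 and 4.4. Nothing is missing; your write-up is in fact more complete than the paper's.
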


\begin{theorem}
Let $\mathcal{C}$ be a subset of $\mathcal{R}$. Then $\mathcal{C}$ is an $\alpha+u\beta$-contacyclic code of length $np^{s}$
over $\mathrm{R}$ if and only if for each integer $j$, $1\leq j\leq r$, there is a unique ideal $\mathcal{C}_{j}$ of
$\mathcal{K}_{j}$ such that $\mathcal{C}=\bigoplus\limits_{j=1}^{r}\varepsilon_{j}(x)\mathcal{C}_{j}~(\!\!\!\!\mod x^{np^{s}}-\alpha-u\beta)$.
\end{theorem}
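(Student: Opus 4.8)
The plan is to recognize this statement as a Chinese-remainder-style decomposition of ideals that follows formally from Lemmas 4.2--4.5. By Proposition 2.2, $\mathcal{C}$ is an $\alpha+u\beta$-constacyclic code precisely when it is an ideal of $\mathcal{R}$, so the entire task reduces to showing that, under the internal direct sum $\mathcal{R}=\mathcal{R}_{1}\oplus\cdots\oplus\mathcal{R}_{r}$ of Lemma 4.3, the ideals of $\mathcal{R}$ are exactly the sets $\bigoplus_{j}\varepsilon_{j}(x)\mathcal{C}_{j}$ with $\mathcal{C}_{j}$ an ideal of $\mathcal{K}_{j}$. Throughout I read the symbol $\varepsilon_{j}(x)\mathcal{C}_{j}$ as the image $\phi_{j}(\mathcal{C}_{j})=\{\varepsilon_{j}(x)a(x)\bmod(x^{np^{s}}-\alpha-u\beta):a(x)\in\mathcal{C}_{j}\}$, so that $\varepsilon_{j}(x)\mathcal{C}_{j}$ is a subset of $\mathcal{R}_{j}$.

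For the forward direction, I would start from an ideal $\mathcal{C}$ of $\mathcal{R}$ and define $\mathcal{C}_{j}=\phi_{j}^{-1}(\varepsilon_{j}(x)\mathcal{C})$. First I would check that $\varepsilon_{j}(x)\mathcal{C}=\mathcal{C}\cap\mathcal{R}_{j}$: the inclusion $\varepsilon_{j}(x)\mathcal{C}\subseteq\mathcal{C}\cap\mathcal{R}_{j}$ is immediate since $\mathcal{C}$ is an ideal and $\varepsilon_{j}(x)\in\mathcal{R}$, while for the reverse inclusion any $y\in\mathcal{C}\cap\mathcal{R}_{j}$ satisfies $y=\varepsilon_{j}(x)y$ by $\varepsilon_{j}^{2}=\varepsilon_{j}$ (Lemma 4.2($ii$)) and hence lies in $\varepsilon_{j}(x)\mathcal{C}$. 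This set is an ideal of the ring $\mathcal{R}_{j}$, so $\mathcal{C}_{j}$ is an ideal of $\mathcal{K}_{j}$ because $\phi_{j}$ is a ring isomorphism (Lemma 4.4). Finally, using $\varepsilon_{1}(x)+\cdots+\varepsilon_{r}(x)=1$ (Lemma 4.2($i$)) I would write $c=c\cdot 1=\sum_{j}\varepsilon_{j}(x)c$ for each $c\in\mathcal{C}$, giving $\mathcal{C}=\sum_{j}\varepsilon_{j}(x)\mathcal{C}=\bigoplus_{j}\varepsilon_{j}(x)\mathcal{C}_{j}$, where the sum is direct by Lemma 4.3.

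For the converse, suppose $\mathcal{C}=\bigoplus_{j}\varepsilon_{j}(x)\mathcal{C}_{j}$ with each $\mathcal{C}_{j}$ an ideal of $\mathcal{K}_{j}$. Closure under addition is clear, so I only need closure under multiplication by an arbitrary $r(x)\in\mathcal{R}$. Writing a typical element of $\mathcal{C}$ as $\sum_{j}d_{j}$ with $d_{j}\in\varepsilon_{j}(x)\mathcal{C}_{j}=\phi_{j}(\mathcal{C}_{j})\subseteq\mathcal{R}_{j}$, I would use $d_{j}=\varepsilon_{j}(x)d_{j}$ to rewrite $r(x)d_{j}=(\varepsilon_{j}(x)r(x))d_{j}$; since $\varepsilon_{j}(x)r(x)\in\varepsilon_{j}(x)\mathcal{R}=\mathcal{R}_{j}$ and $\phi_{j}(\mathcal{C}_{j})$ is an ideal of $\mathcal{R}_{j}$, the product remains in $\phi_{j}(\mathcal{C}_{j})\subseteq\mathcal{C}$. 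Summing over $j$ shows $r(x)\sum_{j}d_{j}\in\mathcal{C}$, so $\mathcal{C}$ is an ideal, hence a code by Proposition 2.2. For uniqueness I would multiply any valid decomposition by $\varepsilon_{l}(x)$ and invoke the orthogonality $\varepsilon_{j}(x)\varepsilon_{l}(x)=0$ for $j\neq l$ (Lemma 4.2($iii$)) together with $\varepsilon_{l}^{2}=\varepsilon_{l}$, which collapses the sum to $\varepsilon_{l}(x)\mathcal{C}=\phi_{l}(\mathcal{C}_{l})$; applying $\phi_{l}^{-1}$ then pins down $\mathcal{C}_{l}=\phi_{l}^{-1}(\varepsilon_{l}(x)\mathcal{C})$ solely in terms of $\mathcal{C}$.

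I do not expect a genuine obstacle here, since Lemmas 4.2--4.5 already supply the full idempotent decomposition and the theorem is essentially the formal fact that ideals of a finite direct sum of rings split as direct sums of component ideals. The only points demanding care are the consistent interpretation of the notation $\varepsilon_{j}(x)\mathcal{C}_{j}$ as $\phi_{j}(\mathcal{C}_{j})$, and the repeated use of the idempotent identities of Lemma 4.2 to move the factor $\varepsilon_{j}(x)$ across products and to isolate each component; these are the steps most likely to hide a slip.
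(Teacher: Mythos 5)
Your proof is correct and follows essentially the same route as the paper: both arguments rest on the orthogonal idempotent decomposition of Lemmas 4.2--4.5. The only difference is presentational --- the paper transfers the problem through the external isomorphism $\phi:\mathcal{K}_{1}\times\cdots\times\mathcal{K}_{r}\rightarrow\mathcal{R}$ and cites ``classical ring theory'' for the fact that ideals of a product ring are products of ideals, whereas you prove that fact directly inside $\mathcal{R}$ via the identities $\sum_{j}\varepsilon_{j}=1$, $\varepsilon_{j}^{2}=\varepsilon_{j}$, $\varepsilon_{j}\varepsilon_{l}=0$, which makes your version slightly more self-contained but not substantively different.
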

\begin{proof}
By Lemma 4.5, we know that $\mathcal{C}$ is an ideal of $\mathcal{R}$ if and only if there is a unique ideal $\mathcal{I}$ of the
ring $\mathcal{K}_{1}\times\cdot\cdot\cdot\times\mathcal{K}_{r}$ such that $\phi(\mathcal{I})=\mathcal{C}$.
Furthermore, by classical ring theory we see that $\mathcal{I}$ is an ideal of $\mathcal{K}_{1}\times\cdot\cdot\cdot\times\mathcal{K}_{r}$
if and only if for each integer $j$, $1\leq j\leq r$, there is a unique ideal $\mathcal{C}_{j}$ of $\mathcal{K}_{j}$ such that
\begin{center}
$\mathcal{I}=\mathcal{C}_{1}\times\cdot\cdot\cdot\times\mathcal{C}_{r}=\{(a_{1},...,a_{r})|a_{j}\in\mathcal{C}_{j},j=1,...,r\}$.
\end{center}
When this condition is satisfied, we have
\begin{center}
  $\mathcal{C}=\phi(\mathcal{I})=\{\sum\limits_{j=1}^{r}\varepsilon_{j}(x)a_{j}|a_{j}\in\mathcal{C}_{j}, j=1,...,r\}$.
\end{center}
Hence $\mathcal{C}=\bigoplus\limits_{j=1}^{r}\varepsilon_{j}(x)\mathcal{C}_{j}$.
\end{proof}

In order to determine all $\alpha+u\beta$-constacyclic codes over $R$, by Theorem 4.6, we need only to study the ideals of $\mathcal{K}_{j}=R[x]/\langle f_{j}(x)^{p^{s}}+ug_{j}(x)\rangle$.

\begin{proposition}
All the nonzero polynomials in $\mathbb{F}_{p^{m}}[x]$ of degree less than $d_{j}$ are invertible in $\mathcal{K}_{j}$.
\end{proposition}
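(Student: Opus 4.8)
The plan is to adapt the strategy of Proposition 3.1 to the ring $\mathcal{K}_{j}=R[x]/\langle f_{j}(x)^{p^{s}}+ug_{j}(x)\rangle$, the essential new ingredient being that $f_{j}(x)$ itself is nilpotent in $\mathcal{K}_{j}$. Indeed, from the defining relation $h_{j}(x)=f_{j}(x)^{p^{s}}+ug_{j}(x)=0$ one has $f_{j}(x)^{p^{s}}=-ug_{j}(x)$ in $\mathcal{K}_{j}$, and since $u^{2}=0$ this yields $f_{j}(x)^{2p^{s}}=u^{2}g_{j}(x)^{2}=0$. Hence $f_{j}(x)$ is nilpotent in $\mathcal{K}_{j}$ of index at most $2p^{s}$, in exact analogy with Lemma 3.2. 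This single observation is what reduces the whole statement to an elementary coprimality argument.

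Let $a(x)\in\mathbb{F}_{p^{m}}[x]$ be nonzero with $\deg(a)<d_{j}$. If $\deg(a)=0$, then $a(x)\in\mathbb{F}_{p^{m}}^{*}$ is already a unit of $R$, hence invertible in $\mathcal{K}_{j}$, so I may assume $1\leq\deg(a)<d_{j}$. First I would invoke that $f_{j}(x)$ is irreducible of degree $d_{j}$: since $0<\deg(a)<d_{j}=\deg(f_{j})$ and $a(x)\neq0$, the irreducible $f_{j}(x)$ cannot divide $a(x)$, so $\gcd(a(x),f_{j}(x))=1$ in $\mathbb{F}_{p^{m}}[x]$. By B\'ezout there exist $b(x),c(x)\in\mathbb{F}_{p^{m}}[x]$ with
\[
a(x)b(x)+c(x)f_{j}(x)=1.
\]
Passing to $\mathcal{K}_{j}$ this reads $a(x)b(x)=1-c(x)f_{j}(x)$. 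Because $f_{j}(x)$ is nilpotent and $\mathcal{K}_{j}$ is commutative, $c(x)f_{j}(x)$ is nilpotent, so $1-c(x)f_{j}(x)$ is a unit of $\mathcal{K}_{j}$ with explicit inverse $\sum_{k=0}^{2p^{s}-1}\bigl(c(x)f_{j}(x)\bigr)^{k}$. Therefore $a(x)b(x)$ is a unit, and multiplying its inverse by $b(x)$ exhibits an inverse of $a(x)$; thus $a(x)$ is invertible in $\mathcal{K}_{j}$, as desired.

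I do not expect any genuine obstacle here: the argument rests entirely on the two facts already isolated, namely the nilpotency of $f_{j}(x)$ in $\mathcal{K}_{j}$ (a direct consequence of $u^{2}=0$, supplied by Lemma 4.1 guaranteeing $g_{j}$ is well defined) and the coprimality $\gcd(a,f_{j})=1$ (a direct consequence of the irreducibility of $f_{j}$ together with the degree bound $\deg(a)<d_{j}$). The only point requiring care is to confirm that $c(x)f_{j}(x)$ is nilpotent rather than merely $f_{j}(x)$ alone, which is immediate in the commutative ring $\mathcal{K}_{j}$. I would note, finally, that compared with the division-with-remainder argument of Proposition 3.1 this B\'ezout-based approach is shorter, dispenses with the induction on the degree, and treats all $a(x)$ of degree less than $d_{j}$ uniformly at once.
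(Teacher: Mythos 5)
Your proof is correct, but it takes a genuinely different route from the paper's. The paper disposes of this proposition by saying it is ``similar to Proposition 3.1,'' whose mechanism is: divide the modulus-defining irreducible by $a(x)$ to get a nonzero remainder $r(x)$ of strictly smaller degree, raise the division identity to the $p^{s}$-th power using the Frobenius identity in characteristic $p$, read it in the quotient ring to express a multiple of $a(x)$ as a unit of the form (unit power of $r$) plus (multiple of $u$), and then induct downward on the degree until the remainder is a nonzero constant. You instead observe up front that $f_{j}(x)^{p^{s}}=-ug_{j}(x)$ forces $f_{j}(x)^{2p^{s}}=0$ in $\mathcal{K}_{j}$, use irreducibility of $f_{j}$ to get a B\'ezout identity $a(x)b(x)+c(x)f_{j}(x)=1$ over $\mathbb{F}_{p^{m}}[x]$, and conclude via the standard fact that $1$ minus a nilpotent is a unit. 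This is shorter, avoids both the Frobenius step and the induction, treats all degrees uniformly, and would in fact streamline Proposition 3.1 itself (with $x^{n}-\alpha_{0}$ in place of $f_{j}$); the paper's approach has the mild advantage of producing an explicit inverse built from the division data without appealing to a geometric series. One small inaccuracy worth fixing: your parenthetical credits Lemma 4.1 with making $g_{j}$ ``well defined,'' but $g_{j}$ is defined explicitly by the factorization procedure, and Lemma 4.1 (coprimality of $f_{j}$ and $g_{j}$) is not needed anywhere in your argument --- it only becomes relevant in Proposition 4.8, where one shows $\langle u\rangle\subseteq\langle f_{j}(x)^{p^{s}}\rangle$.
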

\begin{proof}
The proof is similar to that of Proposition 3.1.
\end{proof}

Furthermore, we have the following result.
\begin{proposition}
In the ring $\mathcal{K}_{j}$, $\langle f_{j}(x)^{p^{s}}\rangle=\langle u\rangle$ and $f_{j}(x)$ is a nilpotent with nipotency index $2p^{s}$.
\end{proposition}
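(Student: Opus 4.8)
The plan is to work from the defining relation $f_j(x)^{p^s}=-u g_j(x)$ that holds in $\mathcal{K}_j=R[x]/\langle f_j(x)^{p^s}+u g_j(x)\rangle$, mirroring the proof of Lemma 3.2 but with the scalar $\beta$ replaced by the polynomial $g_j(x)$, whose coprimality with $f_j(x)$ (Lemma 4.1) will play the role that invertibility of $\beta$ played in the irreducible case. First I would record the easy upper bound on the nilpotency index: since $u^2=0$,
\[
f_j(x)^{2p^s}=\bigl(f_j(x)^{p^s}\bigr)^2=u^2 g_j(x)^2=0,
\]
so $f_j(x)$ is nilpotent of index at most $2p^s$.

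Next I would show that $g_j(x)$ is a unit of $\mathcal{K}_j$. By Lemma 4.1 the polynomials $f_j$ and $g_j$ are coprime in $\mathbb{F}_{p^m}[x]$, so B\'ezout gives $a(x)f_j(x)+b(x)g_j(x)=1$ for some $a,b\in\mathbb{F}_{p^m}[x]$. Reducing modulo $h_j=f_j^{p^s}+ug_j$ and using that $f_j(x)$ is nilpotent (so $a(x)f_j(x)$ is nilpotent and $1-a(x)f_j(x)$ is a unit), I conclude that $b(x)g_j(x)=1-a(x)f_j(x)$ is a unit, whence $g_j(x)$ is invertible; equivalently one may apply Proposition 4.7 to $g_j\bmod f_j$. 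The identity $\langle f_j^{p^s}\rangle=\langle u\rangle$ then follows from two inclusions: $f_j^{p^s}=-ug_j\in\langle u\rangle$ gives one, and $u=-g_j^{-1}f_j^{p^s}\in\langle f_j^{p^s}\rangle$ gives the other.

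It remains to sharpen the index to exactly $2p^s$, that is, to prove $f_j(x)^{2p^s-1}\neq 0$ in $\mathcal{K}_j$, and this is where I expect the real work to lie. Using the relation, $f_j^{2p^s-1}=f_j^{p^s-1}f_j^{p^s}=-u g_j f_j^{p^s-1}$, so I must rule out $u g_j f_j^{p^s-1}\in\langle h_j\rangle$ in $R[x]$. Supposing $u g_j f_j^{p^s-1}=c(x)h_j(x)$ with $c=c_0+uc_1$ and $c_0,c_1\in\mathbb{F}_{p^m}[x]$, I would expand $c\,h_j=c_0 f_j^{p^s}+u\bigl(c_1 f_j^{p^s}+c_0 g_j\bigr)$ and compare the $\mathbb{F}_{p^m}[x]$-part and the $u$-part of the two sides. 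The $u$-free part forces $c_0 f_j^{p^s}=0$, hence $c_0=0$ in the domain $\mathbb{F}_{p^m}[x]$; the $u$-part then reads $c_1 f_j^{p^s}=g_j f_j^{p^s-1}$, and cancelling the nonzero factor $f_j^{p^s-1}$ yields $c_1 f_j=g_j$, i.e.\ $f_j\mid g_j$, contradicting Lemma 4.1. Thus $f_j^{2p^s-1}\neq 0$ and the nilpotency index is exactly $2p^s$. The main obstacle is precisely this lower bound: while the vanishing $f_j^{2p^s}=0$ is immediate, its sharpness genuinely depends on the coprimality of $f_j$ and $g_j$, which I extract through the component comparison in the decomposition $R[x]=\mathbb{F}_{p^m}[x]\oplus u\mathbb{F}_{p^m}[x]$.
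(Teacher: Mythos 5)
Your proof is correct and follows the paper's own strategy: use the relation $f_j(x)^{p^s}=-ug_j(x)$ together with $u^2=0$ for the inclusion $\langle f_j(x)^{p^s}\rangle\subseteq\langle u\rangle$ and the upper bound on the nilpotency index, and invert $g_j(x)$ in $\mathcal{K}_j$ (via coprimality of $f_j$ and $g_j$ plus nilpotency of $f_j$) to get the reverse inclusion $u=-g_j(x)^{-1}f_j(x)^{p^s}$. Two differences are worth noting. First, you obtain the invertibility of $g_j(x)$ directly from a B\'ezout identity supplied by Lemma 4.1, whereas the paper divides $g_j$ by $f_j$, invokes Proposition 4.7 on the (nonzero) remainder, and writes the inverse as a finite geometric series; the two routes are interchangeable, as you observe. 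Second, and more substantively, the paper simply asserts that the nilpotency index equals $2p^s$ ``from $u^2=0$,'' which only yields the upper bound $f_j(x)^{2p^s}=0$; your component comparison in $R[x]=\mathbb{F}_{p^m}[x]\oplus u\mathbb{F}_{p^m}[x]$, showing $ug_j f_j^{p^s-1}\notin\langle h_j\rangle$ because $f_j\nmid g_j$, is a correct argument for the sharpness $f_j(x)^{2p^s-1}\neq 0$ and supplies a step the paper leaves implicit.
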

\begin{proof}
Since $f_{j}(x)^{p^{s}}+ug_{j}(x)=0$ in $\mathcal{K}_{j}$, $\langle f_{j}(x)^{p^{s}}\rangle\subseteq\langle u\rangle$.
It follows from $u^{2}=0$ that $f_{j}(x)$ is a nilpotent with nipotency index $2p^{s}$ in $\mathcal{K}_{j}$.
By division with remainder in $\mathbb{F}_{p^{m}}[x]$, there exist $q(x)$, $r(x)\in\mathbb{F}_{p^{m}}[x]$ such that
\begin{center}
  $g_{j}(x)=f_{j}(x)q(x)+r(x)$,
\end{center}
 where $0\leq deg(r)<d_{j}$ or $r(x)=0$.
Since $\gcd(f_{j}(x),g_{j}(x))=1$, we obtain $r(x)\neq0$. It follows from Proposition 4.7 that $r(x)$ is
invertible in $\mathcal{K}_{j}$. From the nilpotency of $f_{j}(x)$, we obtain $g_{j}(x)$ is invertible in $\mathcal{K}_{j}$ and
\begin{center}
  $g_{j}(x)^{-1}=r(x)^{-1}-r(x)^{-2}f_{j}(x)q(x)-r(x)^{-3}f_{j}(x)^{2}q(x)^{2}$

  $-\cdot\cdot\cdot-r(x)^{-2p^{s}}f_{j}(x)^{2p^{s}-1}q(x)^{2p^{s}-1}$.
\end{center}
Thus, $u=-f_{j}(x)^{p^{s}}g_{j}(x)^{-1}$, i.e., $\langle u\rangle\subseteq\langle f_{j}(x)^{p^{s}}\rangle$.
\end{proof}

\begin{theorem}
The ring $\mathcal{K}_{j}$ is a chain ring whose ideal chain is as follows
\begin{center}
  $\mathcal{K}_{j}=\langle1\rangle\supsetneq\langle f_{j}(x)\rangle\supsetneq\cdot\cdot\cdot\supsetneq\langle f_{j}(x)^{2p^{s}-1}\rangle\supsetneq\langle f_{j}(x)^{2p^{s}}\rangle=\langle0\rangle$.
\end{center}
Each ideal $\langle f_{j}(x)^{i_{j}}\rangle$ has $p^{d_{j}m(2p^{s}-i_{j})}$ elements, where $0\leq i_{j}\leq2p^{s}$.
\end{theorem}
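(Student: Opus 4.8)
The plan is to mirror the argument of Theorem 3.3, exploiting the structural facts already established in Propositions 4.7 and 4.8. The goal is to show that $\langle f_{j}(x)\rangle$ is the unique maximal ideal of $\mathcal{K}_{j}$; once this is in hand, Proposition 2.1 immediately yields both the chain-ring property and the displayed ideal chain, with the nilpotency index $2p^{s}$ supplied by Proposition 4.8, and the cardinality count will follow from the size of the residue field $\mathcal{K}_{j}/\langle f_{j}(x)\rangle$.

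First I would take an arbitrary element and write it as $f(x)=f_{1}(x)+uf_{2}(x)$ with $f_{1}(x),f_{2}(x)\in\mathbb{F}_{p^{m}}[x]$. Dividing $f_{1}(x)$ by $f_{j}(x)$ in $\mathbb{F}_{p^{m}}[x]$ gives $f_{1}(x)=q_{1}(x)f_{j}(x)+r_{1}(x)$ with $r_{1}(x)=0$ or $\deg(r_{1})<d_{j}$. The key reduction is that the term $uf_{2}(x)$ already lies in $\langle f_{j}(x)\rangle$: indeed, by Proposition 4.8 we have $u=-f_{j}(x)^{p^{s}}g_{j}(x)^{-1}$ in $\mathcal{K}_{j}$, so $uf_{2}(x)=-f_{j}(x)^{p^{s}}g_{j}(x)^{-1}f_{2}(x)\in\langle f_{j}(x)\rangle$ since $p^{s}\geq1$. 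Hence $f(x)\equiv r_{1}(x)\pmod{f_{j}(x)}$.

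Next I would use the nilpotency of $f_{j}(x)$ (Proposition 4.8) to decide invertibility. Every element of $\langle f_{j}(x)\rangle$ is nilpotent, and in a commutative ring the sum of a unit and a nilpotent is a unit; combined with Proposition 4.7, which makes $r_{1}(x)$ a unit whenever $r_{1}(x)\neq0$, this shows that $f(x)$ is a unit precisely when $r_{1}(x)\neq0$. Consequently $f(x)$ is a non-unit if and only if $r_{1}(x)=0$, i.e. if and only if $f(x)\in\langle f_{j}(x)\rangle$. Thus the set of non-units of $\mathcal{K}_{j}$ coincides with $\langle f_{j}(x)\rangle$, which is therefore the unique maximal ideal, and by Proposition 2.1 the ring $\mathcal{K}_{j}$ is a chain ring with the asserted ideal chain.

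The remaining task, and the one requiring a little care, is the cardinality formula. By Proposition 2.1 the ideal $\langle f_{j}(x)^{i_{j}}\rangle$ has cardinality $|\mathcal{K}_{j}/\langle f_{j}(x)\rangle|^{2p^{s}-i_{j}}$, so it suffices to identify the residue field. Passing to $\mathcal{K}_{j}/\langle f_{j}(x)\rangle$, the defining relation $f_{j}(x)^{p^{s}}+ug_{j}(x)=0$ forces $ug_{j}(x)=0$; since $\gcd(f_{j},g_{j})=1$ by Lemma 4.1 and $f_{j}$ is irreducible, the image of $g_{j}(x)$ is a unit in $\mathbb{F}_{p^{m}}[x]/\langle f_{j}(x)\rangle\cong\mathbb{F}_{p^{md_{j}}}$, so $u$ is killed and $\mathcal{K}_{j}/\langle f_{j}(x)\rangle\cong\mathbb{F}_{p^{md_{j}}}$, of order $p^{md_{j}}$. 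Substituting then gives $|\langle f_{j}(x)^{i_{j}}\rangle|=p^{md_{j}(2p^{s}-i_{j})}$, as claimed. The only genuinely delicate point is verifying that $u$ vanishes in the residue field, which is exactly where the coprimality of $f_{j}$ and $g_{j}$ is essential.
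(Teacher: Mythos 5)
Your proposal is correct and follows essentially the same route as the paper's proof: reduce an arbitrary element modulo $f_{j}(x)$ using division with remainder and the relation $u=-f_{j}(x)^{p^{s}}g_{j}(x)^{-1}$, invoke Proposition 4.7 and the nilpotency of $f_{j}(x)$ to show the non-units are exactly $\langle f_{j}(x)\rangle$, and conclude via Proposition 2.1. Your explicit identification of the residue field $\mathcal{K}_{j}/\langle f_{j}(x)\rangle\cong\mathbb{F}_{p^{md_{j}}}$ is a detail the paper leaves implicit, and it is a welcome addition rather than a deviation.
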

\begin{proof}
Let $k(x)\in\mathcal{K}_{j}$, we can write uniquely $k(x)=k_{1}(x)+uk_{2}(x)$ with $k_{1}(x), k_{2}(x)\in\mathbb{F}_{p^{m}}[x]$.
Using division with remainder in $\mathbb{F}_{p^{m}}[x]$, there exist $q(x), r_{1}(x)\in\mathbb{F}_{p^{m}}[x]$ such that
\begin{center}
  $k_{1}(x)=q(x)f_{j}(x)+r_{1}(x)$,
\end{center}
where $r_{1}(x)=0$ or $0\leq deg(r_{1})<d_{j}$.
Thus
\begin{center}
$k(x)=q(x)f_{j}(x)+r_{1}(x)+uk_{2}(x)=(q(x)-f_{j}(x)^{p^{s}-1}g_{j}(x)^{-1}k_{2}(x))f_{j}(x)+r_{1}(x)$
\end{center}
If $r_{1}(x)$ is invertible in $\mathcal{K}_{j}$, then $k(x)$ is invertible in $\mathcal{K}_{j}$ as $f_{j}(x)$ is a nilpotent. Otherwise, it follows from Proposition 4.7 that $r_{1}(x)=0$. Thus,
$k(x)\in\langle f_{j}(x)\rangle$ and so $g(x)$ is non-invertible in $\mathcal{K}_{j}$. Furthermore, $k(x)$ is non-invertible in $\mathcal{K}_{j}$
if and only if $k(x)\in\langle f_{j}(x)\rangle$, which means $\langle f_{j}(x)\rangle$
is the unique maximal ideal of $\mathcal{K}_{j}$. By Proposition 2.1, $\mathcal{K}_{j}$ is a chain ring.
\end{proof}

\begin{corollary}
Every $\alpha+u\beta$-constacyclic code $\mathcal{C}$ of length $np^{s}$ over $\mathrm{R}$ is
\begin{center}
  $\mathcal{C}=\bigoplus\limits_{j=1}^{r}\varepsilon_{j}(x)\langle f_{j}(x)^{i_{j}}\rangle~\!\!\!\!\mod(x^{np^{s}}-\alpha-u\beta)$,
\end{center}
 where $0\leq i_{j}\leq2p^{s}$.
The number of codewords of $\mathcal{C}$ is equal to $p^{\sum\limits_{j=1}^{r}d_{j}m(2p^{s}-i_{j})}$.
 Furthermore, the number of $\alpha+u\beta$-constacyclic code over $\mathrm{R}$ of length $np^{s}$ is equal to $(2p^{s}+1)^{r}$.
\end{corollary}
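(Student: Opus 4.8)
The plan is to obtain this corollary directly by feeding the classification of ideals of $\mathcal{K}_{j}$ (Theorem 4.9) into the decomposition theorem (Theorem 4.6). First I would invoke Theorem 4.6: a subset $\mathcal{C}\subseteq\mathcal{R}$ is an $\alpha+u\beta$-constacyclic code precisely when $\mathcal{C}=\bigoplus_{j=1}^{r}\varepsilon_{j}(x)\mathcal{C}_{j}$ for a uniquely determined family of ideals $\mathcal{C}_{j}$ of $\mathcal{K}_{j}$. Then I would apply Theorem 4.9, which asserts that each $\mathcal{K}_{j}$ is a chain ring whose only ideals are $\langle f_{j}(x)^{i_{j}}\rangle$ with $0\leq i_{j}\leq 2p^{s}$; substituting $\mathcal{C}_{j}=\langle f_{j}(x)^{i_{j}}\rangle$ yields the stated expression for $\mathcal{C}$.

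For the codeword count I would use the fact that the map $\phi$ of Lemma 4.5 is a ring isomorphism from $\mathcal{K}_{1}\times\cdots\times\mathcal{K}_{r}$ onto $\mathcal{R}$, so in particular a bijection. Hence $|\mathcal{C}|=|\phi(\mathcal{C}_{1}\times\cdots\times\mathcal{C}_{r})|=\prod_{j=1}^{r}|\mathcal{C}_{j}|$. Reading off $|\langle f_{j}(x)^{i_{j}}\rangle|=p^{d_{j}m(2p^{s}-i_{j})}$ from Theorem 4.9 and multiplying gives $|\mathcal{C}|=p^{\sum_{j=1}^{r}d_{j}m(2p^{s}-i_{j})}$.

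For the total number of codes, I would observe that by the uniqueness clause in Theorem 4.6 the assignment sending a code $\mathcal{C}$ to its tuple of exponents $(i_{1},\ldots,i_{r})$ is a bijection onto $\{0,1,\ldots,2p^{s}\}^{r}$: distinct tuples give distinct ideals of $\mathcal{K}_{1}\times\cdots\times\mathcal{K}_{r}$ and hence distinct codes, and every tuple is realized. Since each coordinate ranges independently over the $2p^{s}+1$ values $0,1,\ldots,2p^{s}$, there are exactly $(2p^{s}+1)^{r}$ codes.

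There is no serious obstacle here; the corollary is essentially bookkeeping layered on top of Theorems 4.6 and 4.9. The one point deserving a moment of care is that all three assertions lean on the \emph{uniqueness} of the decomposition in Theorem 4.6 --- without it the exponent tuple would not be well defined and the count $(2p^{s}+1)^{r}$ could over- or under-count. I would therefore make explicit that the correspondence $\mathcal{C}\leftrightarrow(i_{1},\ldots,i_{r})$ is a genuine bijection before asserting the cardinalities.
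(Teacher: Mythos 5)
Your proposal is correct and matches the paper's intent exactly: the paper states this corollary without proof, treating it as immediate bookkeeping from Theorem 4.6 (the decomposition $\mathcal{C}=\bigoplus_{j}\varepsilon_{j}(x)\mathcal{C}_{j}$ with unique ideals $\mathcal{C}_{j}$ of $\mathcal{K}_{j}$) together with Theorem 4.9 (each $\mathcal{K}_{j}$ is a chain ring with ideals $\langle f_{j}(x)^{i_{j}}\rangle$ of the stated cardinalities). Your explicit attention to the uniqueness clause, which makes the exponent tuple well defined and justifies the count $(2p^{s}+1)^{r}$, is exactly the right point to flag.
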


For any polynomial $h(x)=\sum\limits_{i=0}^{d}c_{i}x^{i}\in\mathrm{R}[x]$ of degree $d\geq1$. Recall that the \emph{reciprocal polynomial} of $h(x)$ is defined as $\widetilde{h}(x)=\widetilde{h(x)}=x^{d}h(\frac{1}{x})=\sum\limits_{i=0}^{d}c_{i}x^{d-i}$
and $h(x)$ is said to be \emph{self-reciprocal} if $\widetilde{h}(x)=\delta h(x)$ for some unit $\delta$ in $R$. It is known that $\widetilde{\widetilde{h}(x)}=h(x)$ if $h(0)\neq0$, and
$\widetilde{h_{1}(x)h_{2}(x)}=\widetilde{h_{1}}(x)\widetilde{h_{2}}(x)$ if $h_{1}(x)$, $h_{2}(x)$ are not zero divisor. Using the notations above, we have
\begin{center}
$x^{np^{s}}-(\alpha+u\beta)^{-1}=-(\alpha+u\beta)^{-1}\widetilde{h_{1}}(x)\widetilde{h_{2}}(x)\cdot\cdot\cdot\widetilde{h_{r}}(x)$.
\end{center}
Since $-(\alpha+u\beta)^{-1}$ is a unit in $\mathrm{R}$,
\begin{center}
$R[x]/\langle x^{np^{s}}-(\alpha+u\beta)^{-1}\rangle=\mathrm{R}[x]/\langle\widetilde{h_{1}}(x)\widetilde{h_{2}}(x)\cdot\cdot\cdot\widetilde{h_{r}}(x)\rangle$.
\end{center}
It is a fact that $h_{1}(x)$, $h_{2}(x)$,...,$h_{r}(x)$ are pairwise coprime if and only if
$\widetilde{h_{1}}(x)$, $\widetilde{h_{2}}(x)$,...,$\widetilde{h_{r}}(x)$ are pairwise coprime.
Using the Chinese remainder theorem, we get
\begin{center}
  $R[x]/\langle x^{np^{s}}-(\alpha+u\beta)^{-1}\rangle\cong R[x]/\langle\widetilde{h_{1}}(x)\rangle\oplus \cdot\cdot\cdot\oplus R[x]/\langle\widetilde{h_{r}}(x)\rangle$
\end{center}
Next we discuss this isomorphism in detail. Some notations are given here:

\begin{eqnarray*}
% \nonumber to remove numbering (before each equation)
\widehat{\mathcal{R}} &=& R[x]/\langle x^{np^{s}}-(\alpha+u\beta)^{-1}\rangle; \\
\widehat{\mathcal{K}}_{j} &=& R[x]/\widetilde{h_{j}}(x), 1\leq j\leq r.
\end{eqnarray*}
We define a map $\tau: \mathcal{R}\rightarrow\widehat{\mathcal{R}}$ as follows:
\begin{center}
 $\tau(a(x))=a(x^{-1})$, $(\forall a(x)\in\mathcal{R})$.
\end{center}
Here, $x^{-1}=(\alpha+u\beta)x^{np^{s}-1}$ in $\widehat{\mathcal{R}}$. Then one can easily verify that $\tau$ is a ring isomorphism from
$\mathcal{R}$ onto $\widehat{\mathcal{R}}$.

As what we have discussed on $\mathcal{R}$, we define
\begin{center}
$\widehat{\varepsilon}_{j}(x)\equiv v_{j}(x^{-1})H_{j}(x^{-1})\equiv 1-w_{j}(x^{-1})h_{j}(x^{-1})~(\!\!\!\!\mod x^{np^{s}}-(\alpha+u\beta)^{-1})$.
\end{center}
 Then we have some lemmas in the ring $\widehat{\mathcal{R}}$ which is similar to $ \mathcal{R}$.
\begin{lemma}
(1) $\widehat{\varepsilon}_{1}(x)+\cdot\cdot\cdot+\widehat{\varepsilon}_{r}(x)=1$ ,
$\widehat{\varepsilon}_{j}(x)^{2}=\widehat{\varepsilon}_{j}(x)$ and $\widehat{\varepsilon}_{j}(x)\widehat{\varepsilon}_{l}(x)=0$
in the ring $\widehat{\mathcal{R}}$ for all $1\leq j\neq l\leq r$.

(2) $\widehat{\mathcal{R}}=\widehat{\mathcal{R}}_{1}\oplus\cdot\cdot\cdot\oplus\widehat{\mathcal{R}}_{r}$
where $\widehat{\mathcal{R}}_{j}=\widehat{\mathcal{R}}\widehat{\varepsilon}_{j}(x)$ with $\widehat{\varepsilon}_{j}(x)$
as its multiplicative identity and satisfies $\widehat{\mathcal{R}}_{j}\widehat{\mathcal{R}}_{l}={0}$ for all $1\leq j\neq l\leq r$.

(3) For any integer $j$,$1\leq j\leq r$, for any $a(x)\in\widehat{\mathcal{K}}_{j}$ we define
\begin{center}
  $\psi_{j}: a(x)\mapsto\widehat{\varepsilon}_{j}(x)a(x)~\mod(x^{np^{s}}-(\alpha+u\beta)^{-1})$.
\end{center}
Then $\psi_{j}$ is a ring isomorphism from $\widehat{\mathcal{K}}_{j}$ onto $\widehat{\mathcal{R}}_{j}$.

(4) For any $a_{j}(x)\in\widehat{\mathcal{K}}_{j}$ for $j=1,...,r$, define
\begin{center}
  $\psi(a_{1}(x),...,a_{r}(x))=\sum\limits_{j=1}^{r}\psi_{j}(a_{j}(x))=\sum\limits_{j=1}^{r}\widehat{\varepsilon}_{j}(x)a(x)~(\!\!\!\!\mod x^{np^{s}}-(\alpha+u\beta)^{-1})$.
\end{center}
Then $\psi$ is a ring isomorphism from $\widehat{\mathcal{K}}_{1}\times\cdot\cdot\cdot\times\widehat{\mathcal{K}}_{r}$ onto $\widehat{\mathcal{R}}$.
\end{lemma}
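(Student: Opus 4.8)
The plan is to exploit the ring isomorphism $\tau\colon\mathcal{R}\rightarrow\widehat{\mathcal{R}}$, $\tau(a(x))=a(x^{-1})$, that has just been introduced, so that every assertion is obtained by transporting the corresponding statement for $\mathcal{R}$ (Lemmas 4.2--4.5) rather than re-running the Chinese remainder theorem from scratch. The single fact that drives everything is the identity $\tau(\varepsilon_{j}(x))=\widehat{\varepsilon}_{j}(x)$. Indeed, from $\varepsilon_{j}(x)\equiv 1-t_{j}(x)h_{j}(x)$ in $\mathcal{R}$ and the fact that $\tau$ is a ring homomorphism fixing $R$, I would compute $\tau(\varepsilon_{j}(x))\equiv 1-t_{j}(x^{-1})h_{j}(x^{-1})$, which is exactly the defining expression of $\widehat{\varepsilon}_{j}(x)$ (with $w_{j}=t_{j}$). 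Verifying this identity carefully is the first step.

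Granting $\tau(\varepsilon_{j})=\widehat{\varepsilon}_{j}$, part (1) is immediate: applying the isomorphism $\tau$ to the three relations $\sum_{j}\varepsilon_{j}=1$, $\varepsilon_{j}^{2}=\varepsilon_{j}$ and $\varepsilon_{j}\varepsilon_{l}=0$ of Lemma 4.2 yields $\sum_{j}\widehat{\varepsilon}_{j}=1$, $\widehat{\varepsilon}_{j}^{2}=\widehat{\varepsilon}_{j}$ and $\widehat{\varepsilon}_{j}\widehat{\varepsilon}_{l}=0$, since $\tau(1)=1$ and $\tau(0)=0$. For part (2), since $\tau$ is a bijective ring homomorphism it sends $\mathcal{R}_{j}=\mathcal{R}\varepsilon_{j}$ onto $\widehat{\mathcal{R}}\widehat{\varepsilon}_{j}=\widehat{\mathcal{R}}_{j}$ and carries the decomposition $\mathcal{R}=\mathcal{R}_{1}\oplus\cdots\oplus\mathcal{R}_{r}$ of Lemma 4.3 to $\widehat{\mathcal{R}}=\widehat{\mathcal{R}}_{1}\oplus\cdots\oplus\widehat{\mathcal{R}}_{r}$; the orthogonality $\widehat{\mathcal{R}}_{j}\widehat{\mathcal{R}}_{l}=\{0\}$ then follows from $\widehat{\varepsilon}_{j}\widehat{\varepsilon}_{l}=0$, and $\widehat{\varepsilon}_{j}$ is the identity of $\widehat{\mathcal{R}}_{j}$ because it is idempotent.

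For parts (3) and (4) I would first record the ring isomorphism $\mathcal{K}_{j}=R[x]/\langle h_{j}(x)\rangle\rightarrow \widehat{\mathcal{K}}_{j}=R[x]/\langle\widetilde{h_{j}}(x)\rangle$ induced by $x\mapsto x^{-1}$. This is legitimate because $x$ is a unit in both rings and $\langle h_{j}(x^{-1})\rangle=\langle\widetilde{h_{j}}(x)\rangle$, the reciprocal $\widetilde{h_{j}}(x)=x^{\deg(h_{j})}h_{j}(x^{-1})$ differing from $h_{j}(x^{-1})$ only by the unit $x^{\deg(h_{j})}$. Under this identification together with $\tau$, the map $\psi_{j}\colon a(x)\mapsto\widehat{\varepsilon}_{j}(x)a(x)$ is the exact image of $\phi_{j}\colon a(x)\mapsto\varepsilon_{j}(x)a(x)$, so that it is a ring isomorphism follows from Lemma 4.4; likewise $\psi$ is the image of $\phi$ and is a ring isomorphism by Lemma 4.5. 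Alternatively, one may simply repeat the proofs of Lemmas 4.4 and 4.5 verbatim, replacing $h_{j}$, $\varepsilon_{j}$, $\mathcal{K}_{j}$, $\mathcal{R}_{j}$ by $\widetilde{h_{j}}$, $\widehat{\varepsilon}_{j}$, $\widehat{\mathcal{K}}_{j}$, $\widehat{\mathcal{R}}_{j}$ throughout, using that $\widetilde{h_{1}},\ldots,\widetilde{h_{r}}$ are pairwise coprime (already observed via $\widetilde{h_{1}(x)h_{2}(x)}=\widetilde{h_{1}}(x)\widetilde{h_{2}}(x)$).

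The main obstacle is entirely bookkeeping rather than conceptual: one must confirm that the substitution $x\mapsto x^{-1}$ is compatible with all the pieces simultaneously---that $\tau$ really sends $\varepsilon_{j}$ to $\widehat{\varepsilon}_{j}$, that it carries $\langle h_{j}\rangle$ to $\langle\widetilde{h_{j}}\rangle$, and that the auxiliary polynomials $v_{j},w_{j}$ defining $\widehat{\varepsilon}_{j}$ are the reciprocal-side analogues of $s_{j},t_{j}$. Once these compatibilities are pinned down, all four statements drop out of the corresponding results for $\mathcal{R}$ with no further computation.
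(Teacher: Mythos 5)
Your proposal is correct. Note first that the paper supplies no proof of this lemma at all: it simply asserts the statements as being ``similar'' to Lemmas 4.2--4.5, so the implicit intended argument is your fallback option, namely repeating the Chinese-remainder computation verbatim with $h_{j},\varepsilon_{j},\mathcal{K}_{j},\mathcal{R}_{j}$ replaced by $\widetilde{h_{j}},\widehat{\varepsilon}_{j},\widehat{\mathcal{K}}_{j},\widehat{\mathcal{R}}_{j}$. Your primary route---transporting Lemmas 4.2--4.5 through the isomorphism $\tau$ via the single identity $\tau(\varepsilon_{j})=\widehat{\varepsilon}_{j}$---is a genuinely different and tidier packaging: it replaces four re-derivations by one compatibility check, and the compatibility facts it isolates ($\tau$ carries $\varepsilon_{j}$ to $\widehat{\varepsilon}_{j}$ and $\langle h_{j}\rangle$ to $\langle\widetilde{h_{j}}\rangle$, and $\psi_{j}=\tau\circ\phi_{j}\circ\tau_{j}^{-1}$) are exactly what the paper needs again, without proof, in Theorem 4.13. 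Two small points you should pin down explicitly. First, the paper writes $\widehat{\varepsilon}_{j}(x)\equiv v_{j}(x^{-1})H_{j}(x^{-1})\equiv 1-w_{j}(x^{-1})h_{j}(x^{-1})$ while $\varepsilon_{j}(x)=s_{j}(x)H_{j}(x)\equiv 1-t_{j}(x)h_{j}(x)$; your identification $\tau(\varepsilon_{j})=\widehat{\varepsilon}_{j}$ requires reading $v_{j}=s_{j}$ and $w_{j}=t_{j}$ (almost certainly a notational slip in the paper, but it is the hinge of your whole argument, so state it). Second, the claim that $x$ is a unit in $\mathcal{K}_{j}$ and $\widehat{\mathcal{K}}_{j}$, which legitimizes the substitution $x\mapsto x^{-1}$ and the equality of ideals $\langle h_{j}(x^{-1})\rangle=\langle\widetilde{h_{j}}(x)\rangle$, needs the one-line justification that $f_{j}(0)\neq 0$ (since $f_{j}\mid x^{n}-\alpha_{0}$ and $\alpha_{0}\neq 0$), whence $h_{j}(0)=f_{j}(0)^{p^{s}}+ug_{j}(0)$ and $\widetilde{h_{j}}(0)=1$ are units of $R$. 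With those two details filled in, all four parts follow as you describe.
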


We now define a map $\tau_{j}:\mathcal{K}_{j}\rightarrow\widehat{\mathcal{K}}_{j}$ by $\tau_{j}(a(x))=a(x^{-1})$ for any
$a(x)\in\mathcal{K}_{j}$. It is easy to verify that $\tau_{j}$ is a ring isomorphism from
$\mathcal{K}_{j}$ onto $\widehat{\mathcal{K}}_{j}$. By Theorem 4.7 we know the ideals in the ring
$\mathcal{K}_{j}$ are of the forms $\langle f_{j}(x)^{i_{j}}\rangle$, $0\leq i_{j}\leq2p^{s}$.
Hence we obtain that every ideal in the ring $\widehat{\mathcal{K}}_{j}$ is of the form
$\langle f_{j}(x^{-1})^{i_{j}}\rangle$. Thus the ideal of $\widehat{\mathcal{R}}$ can be given by
$\mathcal{C}=\sum\limits_{j=1}^{r}\widehat{\varepsilon}_{j}(x)\langle f_{j}(x^{-1})^{i_{j}}\rangle$.

\begin{lemma}
Let $\textbf{a}=(a_{0},a_{1},...,a_{np^{s}-1})$ and $\textbf{b}=(b_{0},b_{1},...,b_{np^{s}-1})$, where $a_{i}, b_{i}\in\mathrm{R}$ for all $i=0,1,...,np^{s}-1$. Let $a(x)=\sum\limits_{i=0}^{np^{s}-1}a_{i}x^{i}\in\mathcal{R}$, $b(x)=\sum\limits_{i=0}^{np^{s}-1}b_{i}x^{i}\in\widehat{\mathcal{R}}$.
 If $\tau(a(x))b(x)=0$ in $\widehat{\mathcal{R}}$, then $\textbf{a}\cdot\textbf{b}=0$.
\end{lemma}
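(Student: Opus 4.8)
The plan is to compute a single coefficient -- the constant term -- of the product $\tau(a(x))b(x)$ in $\widehat{\mathcal{R}}$, show that this coefficient equals $\textbf{a}\cdot\textbf{b}$, and then invoke the hypothesis: if $\tau(a(x))b(x)=0$ then every coefficient of the product vanishes, so in particular the constant term is $0$, which is exactly $\textbf{a}\cdot\textbf{b}=0$.

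First I would fix the abbreviations $N=np^{s}$ and $\lambda=\alpha+u\beta$, so that $\widehat{\mathcal{R}}=R[x]/\langle x^{N}-\lambda^{-1}\rangle$ and the relation $x^{N}=\lambda^{-1}$ holds there. From this I get $x^{-i}=\lambda x^{N-i}$ for $1\leq i\leq N-1$, since $x^{i}\cdot\lambda x^{N-i}=\lambda x^{N}=\lambda\lambda^{-1}=1$. Using $\tau(a(x))=a(x^{-1})=\sum_{i=0}^{N-1}a_{i}x^{-i}$, I would rewrite $\tau(a(x))$ as a genuine polynomial of degree $<N$ in $\widehat{\mathcal{R}}$, namely $\tau(a(x))=a_{0}+\lambda\sum_{i=1}^{N-1}a_{i}x^{N-i}$, so the coefficient of $x^{N-i}$ is $\lambda a_{i}$ for $1\leq i\leq N-1$.

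Next I would multiply by $b(x)=\sum_{l=0}^{N-1}b_{l}x^{l}$ and reduce modulo $x^{N}-\lambda^{-1}$, keeping only the constant term. Because all exponents lie in $\{0,\dots,N-1\}$, a product $x^{k}x^{l}$ contributes to the constant term precisely when $k+l=0$ or $k+l=N$, and in the latter case the reduction $x^{N}=\lambda^{-1}$ inserts a factor $\lambda^{-1}$. The pair $k=l=0$ contributes $a_{0}b_{0}$, while pairing $x^{N-i}$ (coefficient $\lambda a_{i}$) with $b_{i}x^{i}$ contributes $\lambda^{-1}(\lambda a_{i})b_{i}=a_{i}b_{i}$ for each $1\leq i\leq N-1$. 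The decisive point is that the $\lambda$ carried by $\tau(a(x))$ cancels against the $\lambda^{-1}$ produced by the wrap-around, so the constant term equals $a_{0}b_{0}+\sum_{i=1}^{N-1}a_{i}b_{i}=\sum_{i=0}^{N-1}a_{i}b_{i}=\textbf{a}\cdot\textbf{b}$.

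The index bookkeeping is routine; the step that must be carried out with care -- and which I regard as the core of the argument -- is the reduction modulo $x^{N}-\lambda^{-1}$, i.e. confirming that the only contributions to the constant term come from $k+l\in\{0,N\}$ and that the $\lambda$ and $\lambda^{-1}$ factors cancel exactly, with no parasitic terms from higher wrap-arounds (impossible here since $k+l\leq 2N-2$). Once the constant term of $\tau(a(x))b(x)$ is identified with $\textbf{a}\cdot\textbf{b}$, the hypothesis $\tau(a(x))b(x)=0$ forces that coefficient to be zero, which finishes the proof.
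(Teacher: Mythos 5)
Your proposal is correct and follows essentially the same route as the paper: rewrite $\tau(a(x))$ using $x^{-1}=(\alpha+u\beta)x^{np^{s}-1}$ and observe that the constant term of $\tau(a(x))b(x)$ in $\widehat{\mathcal{R}}$ is exactly $\sum_{i=0}^{np^{s}-1}a_{i}b_{i}=\textbf{a}\cdot\textbf{b}$. Your write-up is in fact more careful than the paper's, which asserts the form of the product without spelling out the cancellation of $(\alpha+u\beta)$ against $(\alpha+u\beta)^{-1}$ in the wrap-around terms.
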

\begin{proof}
Since $x^{-1}=(\alpha+u\beta)x^{np^{s}-1}$ in $\widehat{\mathcal{R}}$,
we have
$\tau(a(x))=a_{0}+a_{1}(\alpha+u\beta)x^{np^{s}-1}+a_{2}(\alpha+u\beta)x^{np^{s}-2}+\cdot\cdot\cdot+a_{np^{s}-1}(\alpha+u\beta)x$.
Furthermore, $\tau(a(x))b(x)\!\!=\sum\limits_{i=0}^{np^{s}-1}a_{i}b_{i}+c_{1}x+\cdot\cdot\cdot+c_{np^{s}-1}x^{np^{s}-1}$.
Thus $\textbf{a}\cdot \textbf{b}=0$ from $\tau(a(x))b(x)=0$.
\end{proof}

\begin{theorem}
Let $\mathcal{C}$ be an $(\alpha+u\beta)$-constacyclic code over $\mathrm{R}$ of length $np^{s}$ with $\mathcal{C}=\sum\limits_{j=1}^{r}\varepsilon_{j}(x)\langle f_{j}(x)^{i_{j}}\rangle~(\!\!\!\!\mod x^{np^{s}}-\alpha-u\beta)$.
Then the dual code $\mathcal{C}^{\bot}$ of $\mathcal{C}$ is an $(\alpha+u\beta)^{-1}$-constacyclic code over $\mathrm{R}$ of length $np^{s}$ with
\begin{center}
  $\mathcal{C}^{\bot}=\sum\limits_{j=1}^{r}\widehat{\varepsilon}_{j}(x)\langle f_{j}(x^{-1})^{2p^{s}-i_{j}}\rangle~\!\!\!\!\mod(x^{np^{s}}-(\alpha+u\beta)^{-1})$.
\end{center}
\end{theorem}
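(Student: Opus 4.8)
The plan is to produce an explicit candidate $\mathcal{D}=\sum_{j=1}^{r}\widehat{\varepsilon}_{j}(x)\langle f_{j}(x^{-1})^{2p^{s}-i_{j}}\rangle$ for the dual and to prove $\mathcal{D}=\mathcal{C}^{\bot}$ in two steps: an inclusion $\mathcal{D}\subseteq\mathcal{C}^{\bot}$ followed by a cardinality match. By Proposition 2.4 the dual $\mathcal{C}^{\bot}$ is an $(\alpha+u\beta)^{-1}$-constacyclic code, i.e.\ an ideal of $\widehat{\mathcal{R}}$; and by the decomposition $\widehat{\mathcal{R}}\cong\widehat{\mathcal{K}}_{1}\times\cdots\times\widehat{\mathcal{K}}_{r}$ (Lemma 4.11(4)) together with the chain-ring description of each $\widehat{\mathcal{K}}_{j}$, every such ideal has the shape $\sum_{j}\widehat{\varepsilon}_{j}(x)\langle f_{j}(x^{-1})^{t_{j}}\rangle$ with $0\le t_{j}\le 2p^{s}$. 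Hence the whole content of the theorem is to pin down $t_{j}=2p^{s}-i_{j}$.

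First I would settle the cardinalities. Note $\sum_{j=1}^{r}d_{j}=n$ since $x^{n}-\alpha_{0}=\prod_{j}f_{j}(x)$. Corollary 4.10 gives $|\mathcal{C}|=p^{\sum_{j}d_{j}m(2p^{s}-i_{j})}$, so Proposition 2.3 yields
\[
|\mathcal{C}^{\bot}|=\frac{|R|^{np^{s}}}{|\mathcal{C}|}=\frac{p^{2mnp^{s}}}{p^{\sum_{j}d_{j}m(2p^{s}-i_{j})}}=p^{\sum_{j}d_{j}m\,i_{j}}.
\]
On the other hand, $\tau_{j}\colon\mathcal{K}_{j}\to\widehat{\mathcal{K}}_{j}$ is a ring isomorphism carrying $\langle f_{j}(x)^{2p^{s}-i_{j}}\rangle$ onto $\langle f_{j}(x^{-1})^{2p^{s}-i_{j}}\rangle$, so by Theorem 4.9 the latter ideal has $p^{d_{j}m\,i_{j}}$ elements, giving $|\mathcal{D}|=p^{\sum_{j}d_{j}m\,i_{j}}=|\mathcal{C}^{\bot}|$.

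The main work is the inclusion $\mathcal{D}\subseteq\mathcal{C}^{\bot}$, which I would reduce via Lemma 4.12 to the purely ring-theoretic claim that $\tau(c(x))\,d(x)=0$ in $\widehat{\mathcal{R}}$ for all $c(x)\in\mathcal{C}$ and $d(x)\in\mathcal{D}$ (symmetry of the inner product then gives $\mathbf{c}\cdot\mathbf{d}=0$, so $d\in\mathcal{C}^{\bot}$). Write $c(x)=\sum_{j}\varepsilon_{j}(x)c_{j}(x)$ with $c_{j}(x)\in\langle f_{j}(x)^{i_{j}}\rangle$ and $d(x)=\sum_{l}\widehat{\varepsilon}_{l}(x)d_{l}(x)$ with $d_{l}(x)\in\langle f_{l}(x^{-1})^{2p^{s}-i_{l}}\rangle$. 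The key identification is $\tau(\varepsilon_{j}(x))=\varepsilon_{j}(x^{-1})=\widehat{\varepsilon}_{j}(x)$, immediate from the definition of $\widehat{\varepsilon}_{j}$ (alternatively from uniqueness of the Chinese-remainder idempotents of $\widehat{\mathcal{R}}$). Then $\tau(c(x))=\sum_{j}\widehat{\varepsilon}_{j}(x)c_{j}(x^{-1})$, and the relations $\widehat{\varepsilon}_{j}\widehat{\varepsilon}_{l}=0$ for $j\ne l$ and $\widehat{\varepsilon}_{j}^{2}=\widehat{\varepsilon}_{j}$ from Lemma 4.11(1) collapse the product to $\tau(c(x))d(x)=\sum_{j}\widehat{\varepsilon}_{j}(x)\,c_{j}(x^{-1})d_{j}(x)$.

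It remains to kill each summand inside the component $\widehat{\mathcal{K}}_{j}$. Applying $\tau_{j}$ to Proposition 4.8 (and Theorem 4.9), $f_{j}(x^{-1})$ is nilpotent of index $2p^{s}$ in $\widehat{\mathcal{K}}_{j}$; moreover $c_{j}(x^{-1})\in\langle f_{j}(x^{-1})^{i_{j}}\rangle$ and $d_{j}(x)\in\langle f_{j}(x^{-1})^{2p^{s}-i_{j}}\rangle$, so their product lies in $\langle f_{j}(x^{-1})^{2p^{s}}\rangle=\langle 0\rangle$. Transporting this through the isomorphism $\psi_{j}$ of Lemma 4.11(3) (equivalently, using $\widehat{\varepsilon}_{j}^{2}=\widehat{\varepsilon}_{j}$ and $\widehat{\varepsilon}_{j}f_{j}(x^{-1})^{2p^{s}}=0$) gives $\widehat{\varepsilon}_{j}(x)c_{j}(x^{-1})d_{j}(x)=0$ in $\widehat{\mathcal{R}}$, whence $\tau(c)d=0$ and $\mathcal{D}\subseteq\mathcal{C}^{\bot}$. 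Combined with the cardinality equality this forces $\mathcal{C}^{\bot}=\mathcal{D}$. I expect the main obstacle to be exactly the identification $\tau(\varepsilon_{j})=\widehat{\varepsilon}_{j}$, together with the disciplined bookkeeping needed to guarantee that the exponent addition $i_{j}+(2p^{s}-i_{j})=2p^{s}$ is performed in the local chain ring $\widehat{\mathcal{K}}_{j}$, where the nilpotency index is exactly $2p^{s}$, and not naively in $\widehat{\mathcal{R}}$, in which $f_{j}(x^{-1})^{2p^{s}}$ need not vanish.
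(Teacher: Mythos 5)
Your proposal is correct and follows essentially the same route as the paper: exhibit $\mathcal{D}=\sum_{j}\widehat{\varepsilon}_{j}(x)\langle f_{j}(x^{-1})^{2p^{s}-i_{j}}\rangle$, prove $\mathcal{D}\subseteq\mathcal{C}^{\bot}$ by showing $\tau(\mathcal{C})\cdot\mathcal{D}=0$ via the orthogonal idempotents and the vanishing of $f_{j}(x^{-1})^{2p^{s}}$ in $\widehat{\mathcal{K}}_{j}$ (then invoking Lemma 4.12), and finish with the cardinality count $|\mathcal{D}|=p^{\sum_{j}d_{j}mi_{j}}=|\mathcal{C}^{\bot}|$. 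Your write-up is merely more explicit than the paper's about the identification $\tau(\varepsilon_{j})=\widehat{\varepsilon}_{j}$ and about performing the exponent cancellation inside the local component $\widehat{\mathcal{K}}_{j}$, both of which the paper leaves implicit.
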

\begin{proof}
It is easy to prove that $\tau(\varepsilon_{j}(x))\widehat{\varepsilon}_{l}(x)=0$ in the residue ring
$\widehat{\mathcal{R}}$  if $j\neq l$. From the ring isomorphism
$\tau_{j}: \mathcal{K}_{j}\rightarrow\widehat{\mathcal{K}}_{j}$, we know that $0=\tau_{j}(f_{j}(x)^{2p^{s}})=f_{j}(x^{-1})^{2p^{s}}$.
Let $\mathcal{D}=\sum\limits_{j=1}^{r}\widehat{\varepsilon}_{j}(x)\langle f_{j}(x^{-1})^{2p^{s}-i_{j}}\rangle$. Then
\begin{center}
$\tau(\mathcal{C})\cdot \mathcal{D}=(\sum\limits_{l=1}^{r}\varepsilon_{l}(x^{-1})\langle f_{l}(x^{-1})^{s_{l}}\rangle)(\sum\limits_{j=1}^{r}\widehat{\varepsilon}_{l}(x)\langle f_{j}(x^{-1})^{2p^{s}-i_{j}}\rangle)=0$.
\end{center}
Thus, $\mathcal{D}\subseteq\mathcal{C}^{\bot}$. Moreover from $|\mathcal{D}|=p^{\sum\limits_{j=1}^{r}d_{j}mi_{j}}=|\mathcal{C}^{\bot}|$, we get $\mathcal{D}=\mathcal{C}^{\bot}$.
\end{proof}

\section*{References}


\begin{thebibliography}{99}
\bibitem{s1} T. Abualrub, R. Oehmke, On the generators of $\mathbb{Z}_{4}$ cyclic codes of length $2^{e}$, IEEE Trans. Inf. Theory 49 (2003) 2126-2133

\bibitem{s2} T. Abualrub, I. Siap, Constacyclic codes over $\mathbb{F}_{2}+u\mathbb{F}_{2}$, J. Franklin Inst. 346 (2009), 520-529

\bibitem{s3} M.M. Al-Ashker, Simplex codes over the ring $\mathbb{F}_{2}+u\mathbb{F}_{2}$, Arab. J. Sci. Eng. Sect. A Sci. 30 (2005) 277-285

\bibitem{s4} M. C. V. Amerra, F. R. Nemenzo, On $(1-u)$-cyclic codes over $\mathbb{F}_{p^{k}}+u\mathbb{F}_{p^{k}}$ , Appl. Math. Lett. 21 (2008), 1129-1133

\bibitem{s5} E.R. Berlekamp, Negacyclic codes for the Lee metric, in: Proceedings of the Conference on Combinatorial Mathematics and Its Application, University of North Carolina Press, Chapel Hill, NC, 1968, PP.298-316

\bibitem{s6} S.D. Berman, Semisimple cyclic and Abelian codes II, Kibernetika (Kiev) 3 (1967) 21-30 (in Russian); translated as Cybernetics 3 (1967) 17-23

\bibitem{s7} T. Blackford, Cyclic codes over $\mathbb{Z}_{4}$ of oddly even length, Discrete Appl. Math. 128 (2003) 27-46

\bibitem{s8} A.R. Calderbank, A.R. Hammons, P.V. Kumar, N.J.A. Sloane, P. Sol¨¦, A linear construction for certain Kerdock and Preparata codes, Bull. Amer. Math. Soc. 29 (1993) 218-222

\bibitem{s9} Y. Cao, On constacyclic codes over finite chain rings, Finite Fields Appl. 24 (2013) 124-135

\bibitem{s9} Y. Cao, Y. Cao, J. Gao, F. Fu, Constacyclic codes of length $p^{s}n$ over $\mathbb{F}_{p^{m}}+u\mathbb{F}_{p^{m}}$, arXiv:1512.01406v1[cs.IT] (2015)

\bibitem{s10} G. Castagnoli, J.L. Massey, P.A. Schoeller, N. von Seemann, On repeated-root cyclic codes, IEEE Trans. Inf. Theory 37 (1991) 337-342

\bibitem{s11} B. Chen, H. Q. Dinh, H. Liu , L. Wang, Constacyclic codes of length $2p^{s}$ over $\mathbb{F}_{p^{m}}+u\mathbb{F}_{p^{m}}$, Finite Fields Appl. 37 (2016), 108-130

\bibitem{s11} H.Q. Dinh, Negacyclic codes of length $2^{s}$ over Galois rings, IEEE Trans. Inf. Theory 51 (2005) 4252-4262

\bibitem{s11} H.Q. Dinh, Constacyclic codes of length $2^{s}$ over Galois extension rings of F2+uF2, IEEE Trans. Inf. Theory 55 (2009) 1730-1740

\bibitem{s12} H.Q. Dinh, Constacyclic codes of length $p^{s}$ over $\mathbb{F}_{p^{m}}+u\mathbb{F}_{p^{m}}$, J. Algebra 324 (2010) 940-950

\bibitem{s13} H.Q. Dinh, L. Wang, S. Zhu, Negacyclic codes of length $2p^{s}$ over $\mathbb{F}_{p^{m}}+u\mathbb{F}_{p^{m}}$, Finite Fields Appl. 31 (2015) 178-201

\bibitem{s13} H.Q. Dinh, S.R. L$\acute{o}$pez-Permouth, Cyclic and negacyclic codes over finite chain rings, IEEE Trans. Inf. Theory 50 (2004) 1728-1744

\bibitem{s14} S.T. Dougherty, S. Ling, Cyclic codes over $\mathbb{Z}_{4}$ of even length, Des. Codes Cryptogr. 39 (2006) 127-153

\bibitem{s15} S. T. Dougherty, P. Gaborit, M. Harada, P. Sole, Type II codes over $\mathbb{F}_{2}+u\mathbb{F}_{2}$, IEEE Trans. Inf. Theory 45 (1999), 32-45

\bibitem{s16} G. Falkner, B. Kowol, W. Heise, E. Zehendner, On the existence of cyclic optimal codes, Atti Semin. Mat. Fis. Univ. Modena 28 (1979) 326-341

\bibitem{s17} A.R. Hammons, P.V. Kumar, A.R. Calderbank, N.J.A. Sloane, P. Sol¨¦, The Z4-linearity of Kerdock, Preparata, Goethals, and related codes, IEEE Trans. Inf. Theory 40 (1994) 301-319

\bibitem{s18} W. C. Huffman, On the decompostion of self-dual codes over $\mathbb{F}_{2}+u\mathbb{F}_{2}$ with an automorphism of odd prime number, Finite Fields Appl. 13 (2007), 682-712

\bibitem{s18} W.C. Huffman, V. Pless, Fundamentals of Error-Correcting Codes, Cambridge University Press, Cambridge, 2003

\bibitem{s18} R. Lidl and H. Niederreiter, Finite Fields, vol. 20, in The Encyclopedia of Mathematics, G.-C. Rota, Ed. Reading, MA: Addison-Wesley, 1983

\bibitem{s18} F.J. MacWilliams, N.J.A. Sloane, The Theory of Error-Correcting Codes, 10thimpression, North-Holland, Amsterdam, 1998

\bibitem{s19} J.L. Massey, D.J. Costello, J. Justesen, Polynomial weights and code constructions, IEEE Trans. Inf. Theory 19 (1973) 101-110

\bibitem{s19} B.R.McDonald, Finite rings with identity, Vol.28, Marcel Dekker Incorporated, 1974

\bibitem{s20} A.A. Nechaev, Kerdock code in a cyclic form, Diskr. Math. (USSR) 1 (1989) 123¨C139 (in Russian); English translation: Discrete Math. Appl. 1 (1991) 365-384

\bibitem{s21} G. Norton, A. S$\breve{a}$l$\breve{a}$gean-Mandache, On the structure of linear cyclic codes over finite chain rings, Appl. Algebra Eng. Commun. Comput. 10 (2000) 489-506

\bibitem{s21} V. Pless, W.C. Huffman, Handbook of Coding Theory, Elsevier, Amsterdam, 1998.

\bibitem{s22} R.M. Roth, G. Seroussi, On cyclic MDS codes of length $q$ over $\mathrm{GF}(q)$, IEEE Trans. Inf. Theory 32 (1986) 284-285

\bibitem{s23} J.H. van Lint, Repeated-root cyclic codes, IEEE Trans. Inf. Theory 37 (1991) 343-345
\end{thebibliography}
\end{document}